\newcommand{\beq}{ \begin{equation} }
\newcommand{\eeq}{ \end{equation} }
\newcommand{\beqa}{\begin{eqnarray}}%
\newcommand{\eeqa}{\end{eqnarray}}
\begin{document}

\title{Buffer-Aided Relaying with Adaptive Link Selection}
\author{Nikola Zlatanov,   Robert Schober, and Petar Popovski 
\thanks{This paper has been presented in part at IEEE Globecom 2011, Houston, December 2011.}
\thanks{N. Zlatanov and R. Schober are with the Department of Electrical and Computer Engineering, University of British Columbia, Vancouver, BC V6T 1Z4,
Canada, E-mail: zlatanov@ece.ubc.ca, rschober@ece.ubc.ca}
\thanks{Petar Popovski is with the Department of Electronic Systems, Aalborg University, Denmark, E-mail:
petarp@es.aau.dk.}
}

\maketitle
\vspace{-1.5cm}
\begin{abstract}
In this paper, we consider a simple network consisting of a source, a half-duplex decode-and-forward relay, and a destination. We propose a new relaying protocol employing adaptive link selection, i.e., in any given time slot, based on the channel state information of the source-relay and the relay-destination link a decision is made whether the source or the relay transmits. In order to avoid data loss at the relay, adaptive link selection requires the relay to be equipped with a buffer such that data can be queued 
until the relay-destination link is selected for transmission. We study both delay constrained and delay unconstrained transmission. For the delay unconstrained case, we characterize the optimal link selection policy, derive the corresponding throughput, and develop an optimal 
power allocation scheme. For the delay constrained case, we propose to starve the buffer of the relay by choosing the decision threshold of the link selection policy smaller than the optimal one and derive a corresponding upper bound on the average delay. Furthermore, we propose a 
modified link selection protocol which avoids buffer overflow by limiting the queue size.  Our analytical and numerical results show that buffer-aided relaying with adaptive link selection achieves significant throughput gains compared to 
conventional relaying protocols with and without buffers where the relay employs a fixed schedule for reception and transmission.
\end{abstract}



\newtheorem{theorem}{Theorem}
\newtheorem{corollary}{Corollary}
\newtheorem{remark}{Remark}
\newtheorem{lemma}{Lemma}
\section{Introduction}
The classical three-node relay channel was originally considered by van der Meulen \cite{meulen}. Cover and El Gamal  \cite{cover} investigated the capacity of a memoryless relay  channel consisting of a source, a
destination, and a single full-duplex relay. Recently, renewed interest in relay-assisted communication was sparked by \cite{erkip1} and \cite{erkip2}. Since then, the simple three-node system has become a building 
block for more sophisticated relay networks and a host of cooperative communication techniques have been proposed \nocite{1435648,laneman1,laneman2,nostrina,1542409,1611050,1499041,eth} \cite{1435648}-\cite{eth}.  
The capacity of a three-node relay channel comprised of a source, a destination, and a single half-duplex decode-and-forward (DF) relay was investigated in \cite{1435648}. Under the assumption of full channel state information 
(CSI) at every node, it was shown in \cite{1435648} that the capacity of a three-node DF network without a direct source-destination link is given by the minimum of the source-relay  and the relay-destination link capacities for the case 
when the relay employs a fixed schedule for reception and transmission.
In \cite{1435648}-\cite{eth} and most of the existing literature on half-duplex relaying, it is assumed that the relays receive a packet from the source in one time slot and forward it to the destination in the next time slot. Relay protocols operating
under this restriction are referred to as ``conventional" relay protocols in the following. In this paper, we abandon this paradigm and give relays the freedom to decide in which time slot to receive and in which time slot to transmit. This new 
approach requires the relays to have buffers. Relays with buffers were also considered in \cite{XFTP08} and \cite{Aissa11_J}. In \cite{XFTP08}, the buffer at the relay was used to enable the relay to receive for a fixed number of time slots before 
retransmitting the received information. In \cite{Aissa11_J}, relay selection was considered and buffers enable the selection of the relay with the best source-relay channel for reception and the relay with the best relay-destination 
channel for transmission. However, similar to conventional relay selection without buffers \cite{Bletsas06}, the source transmits in every other time slot. Thus, both \cite{XFTP08} and \cite{Aissa11_J} do not fully exploit the flexibility offered by relays 
with buffers since the schedule of when the source transmits and when a relay transmits is a priori fixed. 

In this paper, we consider buffer-aided relaying with adaptive link selection, where in any given time slot based on the CSI of the source-relay and the relay-destination link a decision is made whether the source or the relay transmits.  
We consider the cases of delay constrained and delay unconstrained transmission. For the delay unconstrained case, we optimize the link selection protocol and the power allocated to the source and the relay. 
Interestingly, the optimal link selection policy requires only knowledge of the instantaneous CSI of the considered time slot and the statistical CSI of the involved links. However, the instantaneous CSI of past and future time slots and the state of the relay's 
buffer are not required for optimal link selection, which facilitates the implementation of the optimal policy. For the delay constrained case, we propose two alternative link selection protocols and provide an upper bound for the average delay of one of them. Our analytical 
and simulation results show, in good agreement, that buffer-aided relaying with adaptive link selection can achieve large performance gains compared to conventional relaying with or without buffer \cite{XFTP08}, as long  as a certain delay can be tolerated.

The remainder of the paper is organized as follows. In Section \ref{sys-mod}, the considered system and channel models are presented. The proposed link selection protocol for buffer-aided relaying is introduced and optimized in Section \ref{var-r-sol}, and
optimal power allocation for source and relay is discussed in Section  \ref{s34}.  In Section \ref{sec-delay}, we propose two protocols for buffer-aided relaying with delay constraints. Numerical results are presented in Section \ref{numerics}, and some 
conclusions are drawn in Section \ref{conclude}.
\section{System Model}\label{sys-mod}
We consider a three-node communication system comprising a source $\mathcal{S}$, a half-duplex relay $\mathcal{R}$, and a destination $\mathcal{D}$, cf.~Fig.~\ref{fig1}. The source can communicate with the destination only through the relay, i.e., 
there is no direct $\mathcal{S}$-$\mathcal{D}$ link. The source sends packets to the relay, which decodes these packets, possibly stores them in its buffer, and eventually sends them to the destination. Throughout this paper, we assume that the source 
has always data to transmit. 
\subsection{Channel Model}\label{chan-mod}
We assume that time is divided into slots of equal lengths. In the $i$th time slot, the transmit powers of source and relay are denoted by $P_S(i)$ and  $P_R(i)$, respectively, and the instantaneous (squared) channel gains of the 
$\mathcal{S}$-$\mathcal{R}$ and $\mathcal{R}$-$\mathcal{D}$ links are denoted by $h_S(i)$ and $h_R(i)$, respectively. $h_S(i)$ and $h_R(i)$ are modeled as mutually independent, non-negative, stationary, and ergodic random processes 
with expected values $E\{h_S(i)\}\triangleq \bar\Omega_S$ and $E\{h_R(i)\}\triangleq \bar\Omega_R$, where $E\{\cdot\}$ denotes expectation. We assume that the channel gains are constant 
during one time slot but change from one time slot to the next due to e.g.~the mobility of the involved nodes and/or frequency hopping. 

The instantaneous link signal-to-noise ratios (SNRs) of the $\mathcal{S}$-$\mathcal{R}$ 
and $\mathcal{R}$-$\mathcal{D}$ channels in the $i$th time slot are given by $s(i)\triangleq \gamma_S(i)  h_S(i)$ and $r(i)\triangleq \gamma_R(i)  h_R(i)$, respectively. 
Here, $\gamma_S(i)=P_S(i)/\sigma_{n_S}^2$ and $\gamma_R(i)=P_R(i)/\sigma_{n_R}^2$ denote the transmit SNRs of the source and the relay, respectively, and $\sigma_{n_S}^2$ and $\sigma_{n_R}^2$ are 
the variances of the additive white Gaussian noise (AWGN) at the relay and the destination, respectively. The average link SNRs are denoted by $\Omega_S\triangleq E\{s(i)\}$ and  $\Omega_R\triangleq E\{r(i)\}$.
 
\subsection{Link Adaptive Transmission Protocol}
In the proposed link adaptive transmission protocol, one of the nodes of the network is responsible for deciding whether the source or the relay should transmit in a given time slot $i$. This node is referred to as the central node in the following. 
The central node broadcasts its decision to the other nodes before transmission in time slot $i$ begins. If they are selected for transmission, the source and the relay adapt their transmission rates to the capacity of the respective link and transmit codewords
spanning one time slot. We assume that source and relay employ capacity-achieving codes. For both link selection and rate adaptation, the nodes require CSI knowledge as will be detailed in the following.

\textbf{CSI requirements:} The central node requires knowledge of the instantaneous channel gains $h_S(i)$ and $h_R(i)$. In addition, regardless of which node is the central node, if the $\mathcal{S}$-$\mathcal{R}$ link is selected, 
the source and the relay require knowledge of $h_S(i)$ for rate adaptation and decoding, respectively. On the other hand, if the $\mathcal{R}$-$\mathcal{D}$ link is selected, the relay and the destination require knowledge of $h_R(i)$ for rate 
adaptation and decoding, respectively. Nodes can obtain the instantaneous channel gains through estimation based on pilot symbols emitted by other nodes and/or CSI feedback from other nodes. Furthermore, we assume that the central
node knows the noise variances $\sigma_{n_S}^2$ and $\sigma_{n_R}^2$, and regardless of which node is the central node, source, relay, and destination know  $\sigma_{n_S}^2$, $(\sigma_{n_S}^2, \sigma_{n_R}^2)$, and $\sigma_{n_R}^2$, 
respectively. Also, if the transmit power is a priori fixed, i.e., $P_S(i)=P_S$ and $P_R(i)=P_R$, $\forall i$, the central node requires knowledge of $P_S$ and $P_R$, and source, relay, and destination require knowledge of $P_S$, $(P_S,P_R)$, 
and $P_R$, respectively. If power allocation is employed, $P_S(i)$ and $P_R(i)$ are computed by the nodes based on their respective knowledge of the instantaneous channel gains and statistical CSI knowledge (cf.~Section \ref{s34}, (\ref{power-eq-2a}),
(\ref{power-eq-2b})).

Which node serves as the central node depends on the network architecture. For example, in the downlink of a cellular network, the source (base station) can serve as the central node as it typically acquires the full
CSI of all links anyways and can afford the complexity of performing adaptive link selection and power allocation. On the other hand, if the relay serves as the central node, source and destination have to acquire only the
CSI of the $\cal S$-$\cal R$ and $\cal R$-$\cal D$ links, respectively.

For convenience we normalize the number of bits transmitted in one time slot by the number of symbols per time slot. Thus, throughout the remainder of this paper, the number of bits refers to the number of bits divided by the
number of symbols in a codeword. In the following, we discuss the network dynamics when source and relay transmit, respectively.

\textbf{Source transmits:} If the source is selected for transmission in time slot $i$, it transmits with rate 
\begin{eqnarray}\label{rav1}
    S_{\mathcal{S}\mathcal{R}}(i)=\log_2(1+ s(i)).
\end{eqnarray}
Hence, the relay 
receives $S_{\mathcal{S}\mathcal{R}}(i)$ data bits from the source and appends them to the queue in its buffer. The (normalized) number of bits in the buffer of the relay at the end of time slot $i$ is denoted by $Q(i)$ and given by 
\begin{eqnarray}\label{eq2}
    Q(i)=Q(i-1)+ S_{\mathcal{S}\mathcal{R}}(i).
\end{eqnarray}

\textbf{Relay transmits:} If the relay transmits in time slot $i$, the number of bits transmitted by  the relay is given by
\begin{eqnarray}\label{3-lit}
    R_{\mathcal{R}\mathcal{D}}(i)=\min\{\log_2(1+r(i)),Q(i-1)\},
\end{eqnarray}
where we take into account that the maximal number of bits that can be send by the relay is limited by the number of bits in its buffer and the instantaneous capacity of the $\mathcal{R}$-$\mathcal{D}$ link. The number of bits remaining in the buffer at the end of time slot $i$ is given by
\begin{eqnarray}\label{eq4}
    Q(i)=Q(i-1)-R_{\mathcal{R}\mathcal{D}}(i),
\end{eqnarray}
which is always non-negative because of (\ref{3-lit}). 

Because of the half-duplex constraint, we have $R_{\mathcal{R}\mathcal{D}}(i) = 0$ when the source transmits (and the relay listens), and we have $S_{\mathcal{S}\mathcal{R}}(i)= 0$ when the relay transmits.
\subsection{Throughput}
Since we assume the source has always data to transmit, the average (normalized) number of bits that arrive at the destination per time slot is given by 
\begin{eqnarray}\label{eq1}
    \tau=\lim_{N\to\infty}\frac{1}{N}\sum_{i=1}^N R_{\mathcal{R}\mathcal{D}}(i),
\end{eqnarray}
i.e., $\tau$ is the throughput of the considered communication system. The goal of the following sections is the maximization of $\tau$ by optimizing the adaptive link selection protocol and the transmit power allocated to source and relay.
\subsection{Conventional Relaying}
For comparison purpose, we provide the throughput of two baseline schemes. Thereby, we assume that the transmit powers at the source and the relay are fixed, i.e., $P_S(i)=P_S$, $P_R(i)=P_R$, $\forall i$.
\subsubsection{Conventional relaying without buffer} 
The instantaneous throughput of conventional relaying without buffer, where the relay receives a packet in one time slot and transmits it in the next, is given in \cite{1435648}, and the corresponding average throughput is 
  \begin{eqnarray}\label{tau-conv1}
    \tau_{\rm conv,1}&\triangleq&\lim_{N\to\infty}\frac{1}{N}\sum_{i=1}^{N/2}\min\{\log_2(1+s(2i-1)),\log_2(1+r(2i))\}\nonumber\\
&=&\frac{1}{2}E\{\min\{\log_2(1+s(i)),\log_2(1+r(i))\}\},
\end{eqnarray}
where the ergodicity of $s(i)$ and $r(i)$ was exploited. Note that conventional relaying without buffer introduces a delay of one time slot since the relay has to wait until the entire codeword is received and decoded before sending the codeword to the destination.

\subsubsection{Conventional relaying with buffer \cite{XFTP08}} 
In conventional relaying with buffer as proposed in \cite{XFTP08}, the relay receives data from the source in the first $N/2$ ($N$ is even) time slots and sends this cumulative information to the destination in the next $N/2$ slots.  
The corresponding maximum achievable average throughput is obtained for $N\to \infty$ and given by
  \begin{eqnarray}\label{tau-conv2}
    \tau_{\rm conv,2} &\triangleq& \lim_{N\to\infty} \frac{1}{N} \min\left\{\sum_{i=1}^{N/2}\log_2(1+s(i)),\sum_{i=N/2}^{N}\log_2(1+r(i))\right\} \nonumber\\
&=& \frac 12\min\{E\{\log_2(1+s(i))\},E\{\log_2(1+r(i))\}\}.
\end{eqnarray} 
Comparing (\ref{tau-conv1}) and (\ref{tau-conv2}), we observe that $\tau_{\rm conv,2}\geq \tau_{\rm conv,1}$ holds \cite{XFTP08}. However, to realize this performance gain, the relay has to be equipped with a  buffer of infinite size and an infinite delay is introduced.

\subsubsection{Rayleigh fading} For the numerical results shown in Section \ref{numerics}, we consider the case where the $\mathcal{S}$-$\mathcal{R}$ and $\mathcal{R}$-$\mathcal{D}$ links are both Rayleigh faded, i.e., the probability density functions (pdfs) 
of $s(i)$ and $r(i)$ are given by $f_s(s)=e^{-s/\Omega_S}/\Omega_S$ and $f_r(r)=e^{-r/\Omega_R}/\Omega_R$, respectively. In this case, $\tau_{\rm conv, 1}$ and $\tau_{\rm conv, 2}$ can be obtained in closed form as
\begin{eqnarray}\label{t-c-1}
    \tau_{\rm conv,1}=\frac{1}{2\ln(2)}  
\exp\left(\frac{\Omega_R+  \Omega_S}{  \Omega_S \Omega_R} \right)
E_1 \left(\frac{\Omega_R+  \Omega_S}{  \Omega_S \Omega_R} \right)
\end{eqnarray}
and
\begin{eqnarray}\label{th-conv}
    \tau_{\rm conv,2}=\frac{1}{2\ln(2)} \min\left\{ \exp\left(\frac{1}{  \Omega_S } \right) E_1\left(\frac{1}{  \Omega_S } \right) ,
 \exp\left(\frac{1}{  \Omega_R } \right) E_1\left(\frac{1}{  \Omega_R } \right) \right\},
\end{eqnarray}
respectively, where $E_1(x)=\int_x^\infty e^{-t}/t\,dt$, $x>0$, denotes the exponential integral function. 
\section{Adaptive Link Selection}\label{var-r-sol}
To gain some insight, we assume throughout this section constant source and relay powers, i.e., $P_S(i)=P_S$, $P_R(i)=P_R$, $\forall i$, and a buffer of unlimited size at the relay. 
For this case, we derive the optimal link selection policy and the corresponding throughput. Optimal power allocation and the effect of a limited buffer size 
will be discussed in Sections \ref{s34} and \ref{sec-delay}, respectively.
\subsection{Problem Formulation}
Let $d_i\in\{0,1\}$ denote a binary decision variable. We set $d_i=1$ if the $\mathcal{R}$-$\mathcal{D}$ channel is selected for transmission in time slot $i$, i.e., the relay transmits and the destination receives. Similarly, we set $d_i=0$ if
the $\mathcal{S}$-$\mathcal{R}$ channel is selected for transmission in time slot $i$, i.e., the source transmits and the relay receives. Exploiting $d_i$, the number of bits send from the source to the relay and from the relay to the destination
in time slot $i$ can be written in compact form as
\begin{eqnarray}\label{s_2}
    S_{\mathcal{S}\mathcal{R}}(i)=(1-d_i)S(i)
\end{eqnarray}
and
\begin{eqnarray}\label{s_1}
    R_{\mathcal{R}\mathcal{D}}(i)=d_i\min\{R(i),Q(i-1)\}, 
\end{eqnarray}
respectively, where $S(i)\triangleq \log_2(1+s(i))$ and $R(i)\triangleq \log_2(1+ r (i))$. Consequently, the throughput in (\ref{eq1}) can be rewritten as
\begin{eqnarray}\label{trup}
    \tau=\lim_{N\to\infty}\frac{1}{N}\sum_{i=1}^N d_i \min\{R(i),Q(i-1)\}.
\end{eqnarray}
The considered problem can now be stated as follows: Find the optimal link selection policy, i.e., the optimal sequence $d_i$, $i\ge 1$, which maximizes throughput $\tau$.
\begin{remark}\label{remark0}
Our problem formulation is quite general in the sense that we have introduced no restrictions concerning the required knowledge regarding the channel and the queue states. In other words, the optimal decision at time $j$, $d_j$, potentially
depends on $h_S(i)$, $h_R(i)$, and $Q(i)$, $i\ge1$, and thus requires non-causal channel knowledge. Fortunately, as will be shown in the ensuing section, this is not the case and the optimal policy turns out to be rather simple and easy to implement.
\end{remark}
\subsection{Optimal Link Selection Policy \label{s32}}
Let us first define the average arrival rate of bits per slot into the queue of the buffer, $A$, and the average departure rate of bits per slot out of the queue of the buffer, $D$, as \cite{2}
\begin{eqnarray}
    A\triangleq\lim_{N\to\infty}\frac{1}{N}\sum_{i=1}^N (1-d_i) S(i) \label{AD}
\end{eqnarray}
and
\begin{eqnarray}
 D\triangleq\lim_{N\to\infty}\frac{1}{N}\sum_{i=1}^N  d_i \min\{R(i),Q(i-1)\},\label{D}
\end{eqnarray}
respectively. We note that the average departure rate of the queue is equal to the throughput. The queue is said to be absorbing if $A>D=\tau$. The following theorem characterizes the optimal link selection policy in terms of the state of the queue in the buffer of the relay.

\begin{theorem}\label{theorem1}
A necessary condition for the optimal link selection policy, which maximizes the throughput, is that the queue in the buffer of the relay is at the edge of non-absorbtion, i.e., the queue is non-absorbing but is at the boundary of a non-absorbing
and an absorbing queue.  
\end{theorem}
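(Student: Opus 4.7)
The plan is to prove the contrapositive: if the queue in the relay's buffer is not at the edge of non-absorption, then the link selection policy cannot be throughput-optimal. The first step is to rewrite the throughput in a convenient form. Define the maximum long-run departure rate supported by the chosen decisions on the $\mathcal{R}$-$\mathcal{D}$ link as
\begin{equation}
C \triangleq \lim_{N\to\infty} \frac{1}{N}\sum_{i=1}^N d_i R(i).
\end{equation}
Then I would show that $\tau = D = \min\{A, C\}$ for every policy. The inequality $D\le C$ is immediate from (\ref{3-lit}), while $D\le A$ follows from flow conservation, since $Q(N)\ge 0$ together with (\ref{eq2}) and (\ref{eq4}) implies $\sum_{i=1}^N R_{\mathcal{R}\mathcal{D}}(i) \le \sum_{i=1}^N S_{\mathcal{S}\mathcal{R}}(i)$. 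In the strictly absorbing regime ($A>C$), one has $Q(i-1)\to\infty$, so $\min\{R(i),Q(i-1)\}=R(i)$ for all but finitely many $i$, giving $D=C$. In the strictly non-absorbing regime ($A<C$), standard queueing stability arguments give $Q(N)/N\to 0$ and flow conservation then yields $D=A$. At the boundary $A=C$, both collapse to $D=A=C$.

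Having established $\tau=\min\{A,C\}$, the core of the argument is a local perturbation. Suppose for contradiction that some policy $\{d_i\}$ maximizes $\tau$ but its long-run averages satisfy $A\ne C$. Consider first $A>C$, so that $\tau=C$ and the queue is absorbing. I would identify a set $\mathcal{A}$ of channel-state realizations $(h_S,h_R)$ of small but positive stationary probability on which the current policy selects $d_i=0$, and flip the decision to $d_i=1$ on $\mathcal{A}$. By stationarity and the ergodic theorem, this modification decreases $A$ by $E\{S(i)\mathbf{1}_{\mathcal{A}}\}>0$ and increases $C$ by $E\{R(i)\mathbf{1}_{\mathcal{A}}\}>0$. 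For $\mathcal{A}$ small enough that the new $A$ still exceeds the new $C$, the throughput of the modified policy equals the new value of $C$, which is strictly larger than the old $C=\tau$, a contradiction. The case $A<C$ is symmetric: flip a small set of slots from $d_i=1$ to $d_i=0$, so that $A$ strictly increases while staying below $C$; since $\tau=A$ in the strictly non-absorbing regime, $\tau$ strictly increases, again a contradiction. Hence optimality forces $A=C$, i.e., the edge condition.

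The principal obstacle is making the perturbation argument precise in the infinite-horizon, long-run-average setting. Two points require care. First, one must argue that without loss of generality the optimal policy can be taken as a stationary (possibly randomized) decision rule depending only on the instantaneous pair $(h_S(i),h_R(i))$, so that long-run time-averages can be replaced by stationary expectations under which a perturbation on a measurable set produces the claimed change in $A$ and $C$ via the ergodic theorem. Second, one must verify that whenever $A\ne C$ and the policy is non-degenerate, a suitable perturbation set $\mathcal{A}$ exists with positive stationary probability and with strictly positive contributions to both $E\{S(i)\mathbf{1}_{\mathcal{A}}\}$ and $E\{R(i)\mathbf{1}_{\mathcal{A}}\}$. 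Once these technicalities are handled, the intuitive ``shift mass toward the under-used side'' argument gives a strict throughput improvement whenever $A\ne C$, so the necessary condition for optimality is $A=C$, which is exactly the stated boundary between non-absorbing and absorbing queues.
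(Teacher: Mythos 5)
Your proposal is correct and follows essentially the same route as the paper's proof: both rest on flow conservation (throughput is capped by the arrival rate $A$ on the non-absorbing side and by the relay-side sum $\frac{1}{N}\sum_i d_i R(i)$ on the absorbing side) and an exchange argument that swaps decisions between the two index sets to strictly improve throughput whenever the queue is not at the edge of non-absorption. Your version merely makes the identity $\tau=\min\{A,C\}$ explicit and perturbs in both directions, whereas the paper perturbs only from the absorbing side and notes that continuing past the edge hurts both rates; the technical caveats you flag (stationarity, existence of the perturbation set) are left at the same level of informality in the paper itself.
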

\begin{proof}
Please refer to Appendix \ref{app_A}.
\end{proof}
Exploiting Theorem \ref{theorem1}, we can establish a useful condition that the optimal link selection policy has to fulfill and a simplified expression for the throughput. This is the subject of the following theorem. 
\begin{theorem}\label{theorem2}
Assuming non-negative, ergodic, and stationary random processes $s(i)$ and $r(i)$, for the  optimal link selection policy the identity
\begin{eqnarray}\label{max-solution}
E\{(1-d_i) S(i)\}=E\{d_i R(i)\}
\end{eqnarray}
holds and the throughput is then given by
\beq\label{max-tau}
\tau = E\{d_i R(i)\}.
\eeq
\end{theorem}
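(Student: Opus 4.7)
The plan is to translate the qualitative statement of Theorem~\ref{theorem1}---that the optimal policy keeps the queue at the edge of non-absorption---into two quantitative consequences: the balance (\ref{max-solution}) and, together with a flow-conservation argument, the throughput expression (\ref{max-tau}).

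First, I would use the stationarity and ergodicity hypotheses on $s(i)$ and $r(i)$. Because the optimal link-selection rule one obtains from Theorem~\ref{theorem1} is a function of the current channel state (per Remark~\ref{remark0} it does not depend on the queue state or on past/future channels), the induced sequences $(1-d_i)S(i)$ and $d_i\min\{R(i),Q(i-1)\}$ inherit stationarity and ergodicity. Applying the ergodic theorem to (\ref{AD}) and (\ref{D}) replaces the time averages by ensemble averages, yielding $A = E\{(1-d_i)S(i)\}$ and $D = E\{d_i\min\{R(i),Q(i-1)\}\} = \tau$.

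Next, I would extract (\ref{max-solution}) from the edge condition itself. Since $\min\{R(i),Q(i-1)\} \leq R(i)$, the maximum sustainable departure rate is at most $E\{d_i R(i)\}$. Whenever the arrival rate $A$ strictly exceeds this quantity, the queue is absorbing; whenever $A$ is strictly less than it, the queue is strictly non-absorbing and one could still inject additional arrivals without triggering absorption---so the policy is not at the edge. Being at the edge of non-absorption therefore corresponds exactly to $A = E\{d_i R(i)\}$, which combined with $A = E\{(1-d_i)S(i)\}$ gives (\ref{max-solution}). Finally, I would invoke flow conservation for the non-absorbing queue: bits cannot accumulate indefinitely, so $A = D$. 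Chaining the equalities yields $\tau = D = A = E\{(1-d_i)S(i)\} = E\{d_i R(i)\}$, which is (\ref{max-tau}).

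The main obstacle, and the step that deserves the most care, is making the edge characterization rigorous. At the edge the queue walk has zero drift, so $Q(i-1)$ is recurrent but has infinite mean, and the event $\{Q(i-1) < R(i)\}$ still occurs infinitely often along any sample path. One must argue that its contribution to $E\{d_i\min\{R(i),Q(i-1)\}\}$ is asymptotically negligible---either through a truncation/tightness argument on the stationary behavior of $Q$, or by a contradiction argument that rules out strict inequality by appealing to the optimality guaranteed by Theorem~\ref{theorem1}---so that the ensemble averages in (\ref{max-solution}) and (\ref{max-tau}) can genuinely be equated rather than only upper-bounded.
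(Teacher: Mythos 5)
Your overall route is the same as the paper's: invoke the edge-of-non-absorption characterization from Theorem \ref{theorem1}, identify the edge with the rate balance $E\{(1-d_i)S(i)\}=E\{d_iR(i)\}$, and use conservation of flow to obtain $\tau=E\{d_iR(i)\}$. There are, however, two genuine problems. First, your justification for replacing the time averages in (\ref{AD}) and (\ref{D}) by expectations rests on the claim that the optimal rule is a function of the current channel state only, ``per Remark~\ref{remark0}''. Remark~\ref{remark0} asserts the opposite: at this stage the policy is still allowed to depend on the entire (even non-causal) channel history and on the queue, and memorylessness is only established later in Theorem~\ref{theorem3}, whose derivation itself relies on Theorem~\ref{theorem2} --- so this reading is circular. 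Relatedly, $d_i\min\{R(i),Q(i-1)\}$ cannot be argued to be stationary: at the edge the queue is a zero-drift reflected walk with no stationary distribution, so your first suggested repair (a tightness/truncation argument on ``the stationary behavior of $Q$'') is not available. The paper sidesteps both issues by working with finite-$N$ time averages over the index sets $I=\{i:d_i=0\}$ and $\bar I=\{i:d_i=1\}$ throughout and invoking ergodicity of $s(i)$ and $r(i)$ only at the very last step.

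Second, and more importantly, the step you yourself flag as ``the main obstacle'' --- showing that the truncation by $Q(i-1)$ contributes nothing, so that the departure rate equals $E\{d_iR(i)\}$ rather than being merely upper-bounded by it --- is precisely the content of the theorem, and you leave it unproved. Your second suggested repair is the right one; here is how the paper executes it. Suppose at the edge one had the strict inequality $\frac{1}{N}\sum_{i\in\bar I}R(i)>\tau=A=\frac{1}{N}\sum_{i\in I}S(i)$. Move a vanishing fraction $\epsilon\subset\bar I$ of slots (with $|\epsilon|/N\to 0$) from the relay to the source; by the definition of the edge the queue becomes absorbing, and in the absorbing regime $R(i)\le Q(i-1)$ holds almost always, so the throughput equals $\frac{1}{N}\sum_{i\in\bar I\setminus\epsilon}R(i)$ exactly and is strictly smaller than the new arrival rate $\frac{1}{N}\sum_{i\in I\cup\epsilon}S(i)$. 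For both strict inequalities to hold simultaneously as $|\epsilon|/N\to 0$ would require a discontinuity in sums that change by at most $\frac{1}{N}\sum_{i\in\epsilon}R(i)\to 0$ and $\frac{1}{N}\sum_{i\in\epsilon}S(i)\to 0$, which is impossible. Hence equality holds, yielding (\ref{max-solution}) and (\ref{max-tau}). Without this (or an equivalent) argument, your proposal only establishes $\tau\le E\{d_iR(i)\}$.
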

\begin{proof}
Please refer to Appendix \ref{app_B}.
\end{proof}

\begin{remark}\label{remark1}
According to Theorem \ref{theorem2}, for the optimal link selection policy, the queue is non-absorbing but is almost always filled to such a level that the number of bits in the queue exceed the number of bits that can be transmitted 
over the $\mathcal{R}$-$\mathcal{D}$ channel. In particular, as shown in Appendix \ref{app_B}, condition (\ref{max-solution}) automatically ensures that for $N\to\infty$, $\frac{1}{N} \sum_{i=1}^Nd_iR(i)= \frac{1}{N} \sum_{i=1}^Nd_i\min\{R(i),Q(i-1)\}$
is valid, i.e., the impact of event $R(i)>Q(i-1)$, $i=1,\ldots,N$, is negligible.
\end{remark}

We are now ready to derive the optimal link selection policy for buffer-aided relaying. According to Theorem \ref{theorem2},  the policy that maximizes the
throughput $\tau$ in (\ref{max-tau}) can only be found inside the set of policies that produce a queue which satisfies (\ref{max-solution}), and not
outside of this set of policies. 
 Thus, for $N\to\infty$, we formulate the following optimization problem:
\begin{eqnarray}\label{MPR1}
\begin{array}{ll}
 {\underset{d_i}{\rm{Maximize: }}}&\frac{1}{N}\sum_{i=1}^N d_i R(i)\\
{\rm{Subject\;\; to: }} &{\rm C1:}\, \frac{1}{N}\sum_{i=1}^N (1-d_i) S(i)=\frac{1}{N}\sum_{i=1}^N d_i R(i)\\
  &{\rm C2:} \frac{1}{N}d_i(1-d_i)=0\quad \forall i\\
\end{array}
\end{eqnarray}
where constraint C1 ensures that the search for the optimal policy is conducted only among those policies that satisfy (\ref{max-solution}) and C2 ensures that $d_i\in\{0,1\}$. We note that C1 and C2 do not exclude the case that the relay 
is chosen for transmission if $R(i)>Q(i-1)$. However, according to Remark \ref{remark1}, C1 ensures that the influence of event $R(i)>Q(i-1)$ is negligible. Therefore, an additional constraint dealing with this event is not required.
The solution of problem (\ref{MPR1}) leads to the following theorem.
\begin{theorem}\label{theorem3}
The optimal policy maximizing the throughput of buffer-aided relaying with adaptive link selection is given by
\begin{eqnarray}\label{kraj}
d_i=\left\{
\begin{array}{cc}
1 & \textrm{if }   \mathcal{F}(r(i))\geq \rho \mathcal{F}(s(i))\\
0 & \textrm{otherwise}
\end{array} 
\right.
\end{eqnarray} 
where $\rho$ is referred to as the decision threshold and the optimal decision function is given by
\beq
\mathcal{F}(x) = \log_2(1+x). 
\label{xx}
\eeq
The optimal decision threshold $\rho$, $\rho_{\rm opt}$, has to satisfy (\ref{max-solution}).
\end{theorem}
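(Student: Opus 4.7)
The plan is to solve the constrained maximization (\ref{MPR1}) by Lagrangian relaxation of the equality constraint C1, exploiting the fact that once C1 is written in expectation form the optimization decouples across time slots. First I would invoke ergodicity of $s(i)$ and $r(i)$ so that, in the limit $N\to\infty$, the time averages in the objective and in C1 can be replaced by the expectations $E\{d_iR(i)\}$ and $E\{(1-d_i)S(i)\}$. Since after this substitution the coupling between different slots vanishes, it is natural to search for $d_i$ as a $\{0,1\}$-valued function of the current channel realization $(s(i),r(i))$ alone — an ansatz whose optimality rests precisely on Theorem~\ref{theorem2}, which has already reduced the problem to a single expectation-level balance condition and thus stripped away all dependence on the buffer state and on past or future CSI.

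Next I would introduce a multiplier $\lambda$ for constraint C1 and form
\begin{equation}
\mathcal{L}(d_i,\lambda)=E\{d_iR(i)\}+\lambda\bigl(E\{(1-d_i)S(i)\}-E\{d_iR(i)\}\bigr)=\lambda E\{S(i)\}+E\bigl\{d_i\bigl[(1-\lambda)R(i)-\lambda S(i)\bigr]\bigr\}.
\end{equation}
Because the bracketed term inside the expectation is linear in $d_i$ and C2 restricts $d_i\in\{0,1\}$, the pointwise maximizer is
\begin{equation}
d_i=1\quad\Longleftrightarrow\quad (1-\lambda)R(i)\ge\lambda S(i)\quad\Longleftrightarrow\quad R(i)\ge\rho\,S(i),
\end{equation}
with $\rho\triangleq\lambda/(1-\lambda)$. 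Substituting $R(i)=\log_2(1+r(i))=\mathcal{F}(r(i))$ and $S(i)=\log_2(1+s(i))=\mathcal{F}(s(i))$ yields exactly the selection rule (\ref{kraj}) with decision function (\ref{xx}). The multiplier $\rho$ is then pinned down by the primal feasibility condition C1, which is precisely (\ref{max-solution}); this is the statement that $\rho_{\rm opt}$ has to satisfy (\ref{max-solution}).

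To close the argument I would verify existence and uniqueness of $\rho_{\rm opt}$: as $\rho$ increases from $0$ to $\infty$, $E\{d_iR(i)\}$ decreases monotonically from $E\{R(i)\}$ to $0$ while $E\{(1-d_i)S(i)\}$ increases monotonically from $0$ to $E\{S(i)\}$, so continuity and the intermediate value theorem guarantee exactly one crossing, provided $E\{S(i)\},E\{R(i)\}$ are finite. Strong duality applies because the relaxation of C2 to $d_i\in[0,1]$ makes the program linear in $d_i$, and under continuous fading the tie event $\{\mathcal{F}(r(i))=\rho\,\mathcal{F}(s(i))\}$ has zero probability so the relaxed and binary optima coincide. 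The main obstacle is the legitimacy of restricting attention to memoryless policies in the first place — a priori $d_i$ may depend on $\{h_S(j),h_R(j),Q(j)\}_{j\ge 1}$ as flagged in Remark~\ref{remark0} — but this obstacle has already been dissolved by Theorems \ref{theorem1} and \ref{theorem2}, which collapse all temporal and queue-state dependence into the single scalar constraint (\ref{max-solution}); the residual task tackled here is purely the per-slot Lagrangian maximization.
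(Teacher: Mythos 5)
Your proposal is correct and follows essentially the same route as the paper's Appendix C: a Lagrangian relaxation of constraint C1 that decouples across slots and yields the threshold rule $d_i=1$ iff $(1-\mu)R(i)\ge\mu S(i)$ with $\rho=\mu/(1-\mu)$, pinned down by primal feasibility (\ref{max-solution}). The only (harmless) differences are that you handle the binary constraint C2 by observing linearity in $d_i$ and taking the endpoint maximizer rather than introducing the multipliers $\beta_i$ as the paper does, and you add an existence/uniqueness argument for $\rho_{\rm opt}$ that the paper omits.
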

\begin{proof}
Please refer to Appendix \ref{app_C}.
\end{proof}
\begin{remark}\label{remark3}
Interestingly, we observe from Theorem \ref{theorem3} that the optimal decision, $d_i$, at time slot $i$, depends only on the instantaneous SNRs, $s(i)$ and $r(i)$, of that time slot. $d_i$ does not depend on the state of the queue, $Q(i)$, in any time slot nor on the instantaneous SNRs in previous or future time slots. This makes the proposed optimal selection policy highly practical. We note that the optimal decision threshold, $\rho_{\rm opt}$, depends on the statistical CSI of both involved links as will be established in the next section. The independence of the optimal link selection policy from non-causal instantaneous CSI is caused by the relay being operated at the edge of non-absorbtion, i.e., the relay node is practically fully backlogged. Non-causal knowledge would only help 
buffer management (i.e., ensuring that there is a sufficient number of bits in the buffer for upcoming time slots), which is not required in the considered regime. 
\end{remark}
\begin{remark}\label{remark3a}
In this paper, we assume that the transmitting nodes have perfect CSI and apply adaptive rate transmission. However, we note that this is not necessary for achieving the maximum throughput in (\ref{max-tau}). In fact, the proposed adaptive link selection protocol (\ref{kraj})
also achieves the maximum  throughput in (\ref{max-tau}) if source and relay transmit long codewords that span (ideally infinitely) many time slots (and consequently many channel states). In this case, both the source and the relay can transmit with constant rate $\tau=E\{(1-d_i)S(i)\}=E\{d_iR(i)\}$ and rate adaptation is not necessary. The first codeword is transmitted by the source without link adaptation and decoded by the relay. For all subsequent codewords, link adaptation is performed based on (\ref{kraj})
and source and relay transmit parts of a long codeword 
whenever they are selected for transmission. The disadvantage of this approach is that the long codewords inherently introduce (ideally infinitely) long delays and the generalization of this approach to the delay constrained case is difficult. Therefore, in this paper, 
we consider adaptive rate transmission and assume that one codeword spans only one time slot (and consequently one channel state).
\end{remark}
%
\subsection{Generalization of the Decision Function and Optimal Decision Threshold\label{s33}}
The optimal decision function in (\ref{xx}) is intuitively pleasing since it basically means that link selection is performed based on the instantaneous link capacities. 
Nevertheless, for some fading distributions, the optimal decision function may lead to complicated expressions for the throughput and the optimal decision threshold $\rho_{\rm opt}$. In such cases,
simpler suboptimal decision functions such as $\mathcal{F}(x) = x$ may be preferable as they generally lead to a similar performance as $\mathcal{F}(x) = \log_2(1+x)$ but
are analytically more tractable. Thus, in the following, we generalize the decision function, $\mathcal{F}(x)$, to be any non-negative, smooth, and increasing function, i.e., 
$\mathcal{F}(x+\epsilon)>\mathcal{F}(x)$ for $\epsilon>0$. We assume that the inverse of $\mathcal{F}(\cdot)$ exists and denote it by $\mathcal{F}^{-1} (\cdot)$.

We note that Theorems \ref{theorem1} and \ref{theorem2} also hold for suboptimal decision functions $\mathcal{F}(x)$. For a given $\mathcal{F}(x)$, the corresponding decision
threshold $\rho$ has to satisfy  (\ref{max-solution}). Thus, the optimal decision threshold, $\rho_{\rm opt}$, for a given (optimal or
suboptimal) decision function, $\mathcal{F}(x)$, can be computed based on the following lemma.

\begin{lemma}\label{lemma1}
Denote the pdfs of $s(i)$ and $r(i)$ by $f_s(s)$ and $f_r(r)$, respectively. For a given decision function, $\mathcal{F}(x)$,
the optimal $\rho_{\rm opt}$ is the solution of 
\begin{eqnarray}\label{resenija}
\int_{0}^\infty \left[\int_{G(r)}^\infty  \log_2(1+s) f_s(s) ds\right] f_r(r) dr= \int_0^{\infty } \left[\int_{H(s)}^{\infty }  \log_2(1+r) f_r(r) dr\right] f_s(s) ds,
\end{eqnarray}
where the integral limits are given by $G(r)\triangleq\mathcal{F}^{-1}(\mathcal{F}(r)/\rho)$ and $H(s)\triangleq \mathcal{F}^{-1}(\rho\mathcal{F}(s))$.
\end{lemma}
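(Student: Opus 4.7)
The plan is to derive (\ref{resenija}) by directly specializing the balance condition of Theorem \ref{theorem2} to the generalized link-selection rule (\ref{kraj}) and computing the two expectations as double integrals using the independence of $s(i)$ and $r(i)$.

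First I would recall that, by Theorems \ref{theorem1} and \ref{theorem2} (which the excerpt notes remain valid for any non-negative, smooth, increasing decision function $\mathcal{F}(x)$), the optimal decision threshold $\rho_{\rm opt}$ must satisfy
\begin{equation}
E\{(1-d_i)S(i)\} \;=\; E\{d_i R(i)\},
\end{equation}
where $d_i=1$ iff $\mathcal{F}(r(i))\geq \rho\mathcal{F}(s(i))$. The proof then reduces to rewriting each side of this identity as an explicit integral in terms of $f_s$, $f_r$, and $\rho$.

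Next I would translate the selection condition into equivalent constraints on $s$ and $r$ separately. Since $\mathcal{F}$ is increasing and invertible, the event $\{d_i=1\}$ can be written in two ways:
\begin{equation}
\mathcal{F}(r)\geq \rho\mathcal{F}(s) \;\Longleftrightarrow\; r\geq \mathcal{F}^{-1}(\rho\mathcal{F}(s))=H(s) \;\Longleftrightarrow\; s\leq \mathcal{F}^{-1}\!\lr \mathcal{F}(r)/\rho\rr=G(r).
\end{equation}
Hence $\{d_i=0\}$ is equivalently $\{s>G(r)\}$. This dual characterization is what allows each expectation to be written with the ``natural'' outer variable.

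Then, using the mutual independence of $s(i)$ and $r(i)$ (so that the joint pdf factorizes as $f_s(s)f_r(r)$), I would compute
\begin{equation}
E\{d_iR(i)\} = \int_0^\infty\!\!\int_0^\infty \log_2(1+r)\,\mathbf{1}_{\{r\geq H(s)\}}\, f_r(r)f_s(s)\,dr\,ds,
\end{equation}
and perform the inner integral over $r$ starting from $H(s)$, which yields the right-hand side of (\ref{resenija}). Similarly,
\begin{equation}
E\{(1-d_i)S(i)\} = \int_0^\infty\!\!\int_0^\infty \log_2(1+s)\,\mathbf{1}_{\{s> G(r)\}}\, f_s(s)f_r(r)\,ds\,dr,
\end{equation}
whose inner integral over $s$ from $G(r)$ onward produces the left-hand side of (\ref{resenija}). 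Equating the two gives exactly the claimed equation, whose solution in $\rho$ is $\rho_{\rm opt}$.

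There is no real analytic obstacle here: monotonicity and invertibility of $\mathcal{F}$ handle the region description, and independence of $s(i)$ and $r(i)$ handles the factorization. The only step requiring some care is the direction of the inequalities when inverting $\mathcal{F}$ (making sure $\{d_i=0\}$ indeed corresponds to $\{s>G(r)\}$ and not the reverse), so that the limits of integration in the two double integrals match the form stated in the lemma.
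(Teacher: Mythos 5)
Your proposal is correct and follows essentially the same route as the paper: both specialize the balance condition $E\{(1-d_i)S(i)\}=E\{d_iR(i)\}$ from Theorem \ref{theorem2}, invert the monotone decision function to describe the events $\{d_i=0\}$ and $\{d_i=1\}$ as $s>G(r)$ and $r\geq H(s)$ respectively, and use the independence of $s(i)$ and $r(i)$ to write each expectation as the corresponding double integral. No gaps.
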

\begin{proof}
The left hand side of (\ref{max-solution}) is the expectation of variable $(1-d_i)\log_2(1+s(i))$. This variable is nonzero only when  $d_i=0$. From (\ref{kraj}) we observe that $d_i=0$  if $\rho \mathcal{F}(s(i))>\mathcal{F}(r(i))$, 
which is equivalent to $s(i)>\mathcal{F}^{-1}(\mathcal{F}(r(i))/\rho)$. Therefore, the domain of integration for calculating the expectation of $(1-d_i)\log_2(1+s(i))$ is $s(i)>\mathcal{F}^{-1}(\mathcal{F}(r(i))/\rho)$  and  $r(i)>0$, which leads to
the left hand side of (\ref{resenija}). Using a similar approach, the right hand side of (\ref{resenija}) is obtained from the right hand side of (\ref{max-solution}). This concludes the proof.
\end{proof}
\begin{remark}\label{remark4}
Eq.~(\ref{resenija}) reveals that the optimal decision threshold, $\rho_{\rm opt}$, depends indeed on the statistical properties of both involved links as was already alluded to in Remark \ref{remark3}. 
\end{remark}
\subsection{Rayleigh Fading\label{sec-Rayleigh}}
For concreteness, we provide in this subsection expressions for $\rho_{\rm opt}$ and the corresponding maximum throughput $\tau_{\max}$ for
 Rayleigh fading links. Thereby, the optimal and a suboptimal decision function $\mathcal{F}(\cdot)$ are considered.
\subsubsection{$\mathcal{F}(x)=x$} In this case, the limits $G(r)$ and $H(s)$ in (\ref{resenija}) are given by $G(r)=r/\rho$ and $H(s)=\rho s$. Thus, after some elementary manipulations, (\ref{resenija}) simplifies to
\begin{eqnarray}\label{find_rho}
 &&\hspace{-7mm} \exp\left(\frac{1}{  \Omega_S } \right) E_1\left(\frac{1}{  \Omega_S } \right) -\exp\left(\frac{1}{  \Omega_R }\right) E_1 \left(\frac{1}{  \Omega_R }\right) 
-\frac{ \Omega_R}{\Omega_R+\rho \Omega_S} 
\exp\left(\frac{\Omega_R+\rho \Omega_S}{ \Omega_S \Omega_R}\right) E_1
\left(\frac{\Omega_R+\rho \Omega_S}{ \Omega_S \Omega_R} \right)\nonumber\\
&&\quad+ \frac{\rho \Omega_S}{\Omega_R+\rho \Omega_S} 
\exp\left(\frac{\Omega_R+\rho \Omega_S}{ \rho \Omega_S \Omega_R} \right)
E_1 \left(\frac{\Omega_R+\rho \Omega_S}{ \rho \Omega_S \Omega_R} \right)=0.
\end{eqnarray}
The optimal value of $\rho$ for $\mathcal{F}(x)=x$, $\rho_{{\rm opt},1}$, can be obtained  from (\ref{find_rho}) via a simple one dimensional search.

The corresponding maximal throughput can be obtained from $\tau_{\max, 1}=E\{d_i \log_2(1+r(i))\}$ and is given by
\begin{eqnarray}\label{eq_to}
    \tau_{\max, 1}=\frac{1}{\ln(2)} \exp\left(\frac{1}{  \Omega_R }\right) E_1 \left(\frac{1}{  \Omega_R }\right)
    -\frac{1}{\ln(2)}\frac{\rho \Omega_S}{\Omega_R+\rho \Omega_S} 
\exp\left(\frac{\Omega_R+\rho \Omega_S}{ \rho \Omega_S \Omega_R} \right)
E_1 \left(\frac{\Omega_R+\rho \Omega_S}{ \rho \Omega_S \Omega_R} \right),
\end{eqnarray}
where $\rho=\rho_{{\rm opt},1}$.
\subsubsection{$\mathcal{F}(x)=\log_2(1+x)$} In this case, we have $G(r)=(1+r)^{\frac{1}{\rho}}-1$ and $H(s)=(1+s)^\rho-1$ in (\ref{resenija}). Thus, after some manipulations, we obtain from
(\ref{resenija}) 
\begin{eqnarray}\label{find_rho_log}
 &&\hspace{-8mm}\int_0^\infty\bigg[ \exp\left(-\frac{(r+1)^{\frac{1}{\rho }}-1}{\Omega_S}\right)\ln \left((r+1)^{\frac{1}{\rho }}\right)
+e^{\frac{1}{\Omega_S}} E_1\left(\frac{(r+1)^{\frac{1}{\rho }}}{\Omega_S}\right) 
\bigg]
\times\frac{1}{\Omega_R}\exp\left(-\frac{r}{\Omega_R}\right) dr \nonumber\\
&&\hspace{-8mm}-\int_0^\infty\bigg[
\exp\left(-\frac{(s+1)^{\rho }-1}{\Omega_R}\right) \ln \left((s+1)^{\rho }\right)
+e^{\frac{1}{\Omega_R}} E_1\left(\frac{(s+1)^{\rho }} {\Omega_R}\right)
\bigg] \times\frac{1}{\Omega_S}\exp\left(-\frac{s}{\Omega_S}\right) ds=0.
\end{eqnarray}
The optimal $\rho$, $\rho_{{\rm opt},2}$, can be found numerically from (\ref{find_rho_log}). 
The corresponding maximum throughput is obtained as 
\begin{eqnarray}
    \tau_{\max,2}=\frac{1}{\ln(2)}\int_0^\infty\bigg[
\exp\left(-\frac{(s+1)^{\rho }-1}{\Omega_R}\right)
\times \ln \left((s+1)^{\rho }\right)
+e^{\frac{1}{\Omega_R}} E_1 \left(\frac{(s+1)^{\rho }} 
{\Omega_R}\right)
\bigg]\frac{1}{\Omega_S}\exp\left(-\frac{s}{\Omega_S}\right) ds,
\label{eq_to_log}
\end{eqnarray}
where $\rho=\rho_{{\rm opt},2}$.
\subsubsection{Special case ($\Omega_S=\Omega_R$)} For the special case $\Omega_S=\Omega_R=\Omega$, we obtain from (\ref{find_rho}) and (\ref{find_rho_log}) $\rho_{{\rm opt},1}=\rho_{{\rm opt},2}=1$, 
and the corresponding maximal throughput is 
\begin{eqnarray}\label{eq_to_r1}
    \tau_{\max} = \tau_{\max, 1}=  \tau_{\max, 2}=\frac{1}{\ln(2)} \exp\left(\frac{1}{\Omega}\right) E_1 \left(\frac{1}{  \Omega}\right)
    -\frac{1}{2\ln(2)} 
\exp\left(\frac{2}{  \Omega} \right)
E_1 \left(\frac{2}{ \Omega} \right).
\end{eqnarray}
Comparing this throughput with the throughput achievable with conventional buffer-aided relaying without adaptive link selection, cf.~(\ref{th-conv}), the gain of adaptive link selection can be characterized by
\beq
\tau_{\max}/\tau_{\rm conv,2}  = 2- \exp\left(\frac{2}{  \Omega} \right)E_1 \left(\frac{2}{ \Omega} \right)/\left[\exp\left(\frac{1}{  \Omega} \right)E_1 \left(\frac{1}{ \Omega} \right)\right]\ge 1,
\eeq
where the ratio $\tau_{\max}/\tau_{\rm conv,2}$ monotonically increases from 1 to 1.5 as $\Omega$ decreases from $\infty$ to zero. We note that the results in Section \ref{numerics} reveal that the gain of adaptive link
selection is minimum for $\Omega_S=\Omega_R$, cf.~Fig.~\ref{fig2}.
\begin{remark}
From (\ref{kraj}) it is easy to see that for $\rho=1$ the decision functions $\mathcal{F}(x)=\log_2(1+x)$ and $\mathcal{F}(x)=x$ are equivalent in the sense that they lead to the same decisions. Hence, both decision functions lead
to identical throughputs $\tau_{\max}=\tau_{\max, 1}=\tau_{\max, 2}$. However, for $\Omega_S\ne\Omega_R$, $\rho_{{\rm opt},1}\ne\rho_{{\rm opt},2}\ne1$ holds, and the decision functions are no longer equivalent and $\tau_{\max, 2}>\tau_{\max, 1}$ holds.
\end{remark}
\section{Power Allocation\label{s34}}
So far, we have assumed that the source and relay transmit powers are fixed. In this section, we jointly optimize the power allocation and link selection policies for buffer-aided relaying.
 \subsection{Problem Formulation and Optimal Power Allocation}
Our goal is to jointly optimize the link selection variable $d_i$ and the powers $P_S(i)$ and $P_R(i)$ in each time slot $i$ such that the throughput is maximized. For convenience, we optimize in the
following the transmit  SNRs without fading $\gamma_S(i)$ and $\gamma_R(i)$, which may be viewed as normalized powers, instead of the powers $P_S(i)=\gamma_S(i)\sigma_{n_S}^2$ and $P_R(i)=\gamma_R(i)\sigma_{n_R}^2$ 
themselves. For a fair comparison, we limit the average (normalized) power consumed by the source and the relay to $\Gamma$. This leads for $N\to\infty$ to the following optimization problem:
 \begin{eqnarray}\label{power-eq-3}
\begin{array}{ll}
 {\underset{\gamma_S(i)\geq 0, \gamma_R(i)\geq 0, d_i}{\rm{Maximize: }}}&\frac{1}{N}\sum_{i=1}^N d_i \log_2(1+\gamma_R(i) h_R(i))\\
{\rm{Subject\;\; to: }} &{\rm C1:\;\;}\frac{1}{N}\sum_{i=1}^N (1-d_i) \log_2(1+\gamma_S(i) h_S(i))=\frac{1}{N}\sum_{i=1}^N d_i \log_2(1+\gamma_R(i) h_R(i))\\
& {\rm C2:\;\;}\frac{1}{N}d_i(1-d_i)=0\\
 & {\rm C3:\;\;}\frac{1}{N}\sum_{i=1}^N (1-d_i) \gamma_S(i)+\frac{1}{N}\sum_{i=1}^N d_i \gamma_R(i)\leq \Gamma
\end{array}
\end{eqnarray} 
where constraints C1 and C2 are identical to the constraints in (\ref{MPR1}) and C3 is the joint source-relay power constraint. The solution of Problem (\ref{power-eq-3}) is summarized in the following theorem. 
\begin{theorem}\label{theorem4}
The optimal (normalized) powers $\gamma_S(i)$ and $\gamma_R(i)$ and decision variable $d_i$ maximizing the throughput of buffer-aided relaying with adaptive link selection while satisfying an average 
source-relay power constraint are given  by
\begin{eqnarray}
    \gamma_S(i)&=&
\left\{
\begin{array}{cl}
 \rho/\lambda- 1/h_S(i)  &\textrm{ if } h_S(i)> \lambda/\rho  \\
0&\textrm{ otherwise }
\end{array}
\right.
\label{power-eq-2a}\\
\gamma_R(i)&=&
\left\{
\begin{array}{ll}
 1/\lambda - 1/h_R(i)  &\textrm{ if } h_R(i)> \lambda \;\;\; \\
0&\textrm{ otherwise }
\end{array}
\right.
\label{power-eq-2b}
\end{eqnarray}
\begin{eqnarray}\label{sel-var-d}
 d_i=
\left\{
\begin{array}{lll}
1 &{\rm if\;} \Big[\ln\left(\frac{h_R(i)}{\lambda}\right)+\frac{\lambda}{h_R(i)}-1 > \rho\ln\left(\frac{\rho}{\lambda}h_S(i)\right)+\frac{\lambda}{h_S(i)}-\rho
\textrm{ AND }   h_R(i)>\lambda \textrm{ AND } h_S(i)>\frac{\lambda}{\rho} \Big]
\\
 & \quad\textrm{ OR } \Big[ h_R(i)>\lambda \textrm{ AND } h_S(i)\leq\frac{\lambda}{\rho}\Big]\\
0 &{\rm otherwise}
\end{array}
\right.\label{kraj-pa}
\end{eqnarray}
    %
where the optimal $\rho$, $\rho_{\rm opt}$, and the optimal $\lambda$, $\lambda_{\rm opt}$, have to satisfy C1 and C3 in (\ref{power-eq-3}) for $N\to\infty$ with equality.
\end{theorem}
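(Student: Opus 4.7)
The plan is to attack Problem (\ref{power-eq-3}) by Lagrangian decomposition. After scaling the objective by $\ln 2$ (a monotone transformation that does not change the optimizer), attach a sign-free multiplier $\rho$ to the equality constraint C1 and a multiplier $\lambda\geq 0$ to the power constraint C3. The Lagrangian then reads $\mathcal{L}=\frac{1}{N}\sum_{i=1}^{N}\ell_i+\lambda\Gamma$ with per-slot term
\begin{equation}
\ell_i=d_i\big[\ln(1+\gamma_R(i)h_R(i))-\lambda\gamma_R(i)\big]+(1-d_i)\big[\rho\ln(1+\gamma_S(i)h_S(i))-\lambda\gamma_S(i)\big],
\end{equation}
where the $\rho$ appearing in the theorem is a harmless reparameterization of the dual variable (consistent with Theorem \ref{theorem3}). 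Since $\rho$ and $\lambda$ are common to all slots, the Lagrangian decouples and each $\ell_i$ can be maximized separately.

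The per-slot problem splits on $d_i$. When $d_i=0$, only $\rho\ln(1+\gamma_S h_S(i))-\lambda\gamma_S$ is active, and its unique maximizer over $\gamma_S\geq 0$ is precisely the water-filling expression (\ref{power-eq-2a}), with attained value $V_0(i)=\rho\ln(\rho h_S(i)/\lambda)+\lambda/h_S(i)-\rho$ when $h_S(i)>\lambda/\rho$ and $V_0(i)=0$ otherwise. Symmetrically, the $d_i=1$ branch yields (\ref{power-eq-2b}) and $V_1(i)=\ln(h_R(i)/\lambda)+\lambda/h_R(i)-1$ when $h_R(i)>\lambda$, else $V_1(i)=0$. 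The optimal selection is $d_i=1$ iff $V_1(i)>V_0(i)$, and a short case analysis on whether $h_R(i)>\lambda$ and whether $h_S(i)>\lambda/\rho$ reproduces the two disjuncts in (\ref{sel-var-d}): the first captures the regime in which both water levels are positive, so the comparison becomes the explicit inequality displayed there; the second captures $h_S(i)\leq\lambda/\rho$ (so $V_0(i)=0$ and $d_i=1$ wins as soon as $V_1(i)>0$, i.e., $h_R(i)>\lambda$). In the remaining cases $h_R(i)\leq\lambda$ forces $V_1(i)=0$ and hence $d_i=0$, again matching (\ref{sel-var-d}).

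The multipliers themselves are pinned down by primal feasibility: C1 determines $\rho_{\rm opt}$, and since allocating more average power strictly increases throughput ($\lambda_{\rm opt}>0$), complementary slackness forces C3 to hold with equality and fixes $\lambda_{\rm opt}$. The main obstacle I anticipate is justifying strong duality, because C2 makes Problem (\ref{power-eq-3}) nominally non-convex. I would handle this in the same spirit as Appendix \ref{app_C} for Theorem \ref{theorem3}: relax C2 to $d_i\in[0,1]$, invoke stationarity and ergodicity of $(h_S(i),h_R(i))$ together with the standard time-sharing argument to show the relaxation is tight as $N\to\infty$, and observe that under continuous fading the tie set $\{V_0(i)=V_1(i)\}$ has probability zero, so the relaxed optimum is binary almost surely and coincides with (\ref{sel-var-d}). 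The remaining step---simplifying $V_1(i)>V_0(i)$ algebraically to the displayed form---is tedious but mechanical.
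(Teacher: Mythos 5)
Your proposal is correct and rests on the same foundation as the paper's proof in Appendix \ref{app_pa}: form the Lagrangian of Problem (\ref{power-eq-3}), observe that it decouples across time slots, and recover (\ref{power-eq-2a})--(\ref{sel-var-d}) from the per-slot optimality conditions, with $\rho$ and $\lambda$ being reparameterized duals of C1 and C3 fixed by primal feasibility (your reparameterization matches the paper's $\rho=\mu/(1-\mu)$, $\lambda=\nu\ln(2)/(1-\mu)$). The one genuine difference is how the binary constraint C2 is handled. The paper keeps a per-slot multiplier $\beta_i$ on $d_i(1-d_i)$, differentiates $\mathcal{L}$ with respect to $d_i$ as if it were continuous, and extracts the selection rule from the sign conditions on $\beta_i$ (as in Appendix \ref{app_C} for Theorem \ref{theorem3}); you instead drop $\beta_i$, solve the two branches $d_i\in\{0,1\}$ exactly, and compare the attained values $V_0(i)$ and $V_1(i)$. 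Your route is more transparent: the water-filling solutions and the values $V_0(i)=\rho\ln(\rho h_S(i)/\lambda)+\lambda/h_S(i)-\rho$ and $V_1(i)=\ln(h_R(i)/\lambda)+\lambda/h_R(i)-1$ are computed correctly, both are nonnegative exactly above the respective cutoffs, and your four-way case analysis reproduces (\ref{sel-var-d}) including the $h_S(i)\leq\lambda/\rho$ disjunct. It also avoids the slightly awkward step of differentiating with respect to a variable that is constrained to be binary. Finally, you explicitly flag the strong-duality/non-convexity issue and sketch a time-sharing/relaxation argument plus a measure-zero tie-set observation; the paper's proof is silent on this point, so your version is, if anything, the more careful of the two. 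No gaps.
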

\begin{proof}
Please refer to Appendix \ref{app_pa}.
\end{proof}
 \begin{remark}
 Similar to the case without power allocation discussed in Section \ref{var-r-sol}, it is also possible for the case with power allocation to simplify the link selection policy in (\ref{kraj-pa}). For example, a simple suboptimal link selection policy which only depends
 on the instantaneous channel gains $h_S(i)$ and $h_R(i)$ but is independent of the transmit powers $\gamma_S(i)$ and $\gamma_R(i)$ may be adopted at the expense of some loss in performance. However, in this paper, we do not pursue suboptimal link 
 selection policies for the case of power allocation because of space limitation. 
\end{remark}
\subsection{Finding the Optimal $\lambda$ and $\rho$}
 The following  lemma establishes two equations from which the optimal $\lambda$ and $\rho$ can be found. 
\begin{lemma}\label{lemma2}
Denote the pdfs  of $h_S(i)$ and $h_R(i)$ by $f_{h_S}(h_S)$ and $f_{h_R}(h_R)$, respectively.  Let the transmit powers of the source and the relay in time slot $i$ be given by  (\ref{power-eq-2a}) and (\ref{power-eq-2b}), respectively, and the
link selection variable $d_i$ by (\ref{sel-var-d}). Then,  the optimal $\rho$,  $\rho_{\rm opt}$, and the optimal $\lambda$, $\lambda_{\rm opt}$, maximizing the throughput of buffer-aided relaying with adaptive link selection and power allocation
have to fulfill the following two equations 
\begin{eqnarray}\label{lemma-2-eq}
&&\hspace{-8mm}\int_{0}^{\lambda} \left[\int_{\lambda/\rho}^{\infty}  \log_2\left(\frac{\rho h_S}{\lambda}\right) f_{h_S}(h_S) d h_S\right] f_{h_R}(h_R) d h_R
+
\int_{\lambda}^\infty \left[\int_{L_1}^{\infty}  \left(\frac{\rho h_S}{\lambda}\right) f_{h_S}(h_S) d h_S\right] f_{h_R}(h_R) d h_R 
\nonumber\\
&&\hspace{-8mm}
= 
\int_{0}^{\lambda/\rho } \left[\int_{\lambda}^{\infty }  \log_2\left(\frac{h_R}{\lambda}\right) f_{h_R}(h_R) d h_R\right] f_{h_S}(h_S) d h_S
+
\int_{\lambda/\rho}^{\infty } \left[\int_{L_2}^{\infty }  \log_2\left(\frac{h_R}{\lambda}\right) f_{h_R}(h_R) d h_R\right] f_{h_S}(h_S) d h_S\; ,\nonumber\\
\end{eqnarray}
\begin{eqnarray}\label{lemma-2-eq-nova}
&&\hspace{-8mm}\int_{0}^{\lambda} \left[\int_{\lambda/\rho}^{\infty}  \left(\frac{\rho}{\lambda}-\frac{1}{h_S}\right) f_{h_S}(h_S) d h_S\right] f_{h_R}(h_R) d h_R
+
\int_{\lambda}^\infty \left[\int_{L_1}^{\infty}  \left(\frac{\rho}{\lambda}-\frac{1}{h_S}\right) f_{h_S}(h_S) d h_S\right] f_{h_R}(h_R) d h_R 
\nonumber\\
&&\hspace{-8mm}
+ 
\int_{0}^{\lambda/\rho } \left[\int_{\lambda}^{\infty } \left(\frac{1}{\lambda}-\frac{1}{h_R}\right) f_{h_R}(h_R) d h_R\right] f_{h_S}(h_S) d h_S
+
\int_{\lambda/\rho}^{\infty } \left[\int_{L_2}^{\infty }  \left(\frac{1}{\lambda}-\frac{1}{h_R}\right) f_{h_R}(h_R) d h_R\right] f_{h_S}(h_S) d h_S \nonumber\\
&&\hspace{-8mm}=\Gamma
\end{eqnarray}
where 
\begin{eqnarray}
 L_1&=&-\frac{\lambda }{\rho W(-e^{(h_R-\lambda)/(\rho h_R)-1}(\lambda/h_R)^{1/\rho})}, \quad  L_2=-\frac{\lambda }{W(-e^{\rho-1-\lambda/h_S}(\lambda/(\rho h_S))^{\rho})}.
\label{L_1}
\end{eqnarray}
Here, $W(\cdot)$ is the Lambert $W$-function \cite{corless1996lambertw}, which is available as built-in function in software packages such as Mathematica. The maximum throughput is given by the left  (and right) hand side of (\ref{lemma-2-eq}).
\end{lemma}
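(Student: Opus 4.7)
The plan is to show that the two equations (\ref{lemma-2-eq}) and (\ref{lemma-2-eq-nova}) are nothing more than constraints C1 and C3 of Problem (\ref{power-eq-3}), evaluated with equality as dictated by Theorem \ref{theorem4}, after the optimal water-filling powers (\ref{power-eq-2a}),(\ref{power-eq-2b}) and the optimal decision rule (\ref{sel-var-d}) have been substituted and the $(h_S,h_R)$-plane partitioned according to the value of $d_i$.

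The first step is to invoke the ergodicity of $h_S(i)$ and $h_R(i)$ to replace the time averages in C1 and C3 by the corresponding expectations over the joint pdf $f_{h_S}(h_S)f_{h_R}(h_R)$ (using independence of the two links). Substituting (\ref{power-eq-2a}) into $\log_2(1+\gamma_S(i)h_S(i))$ and $\gamma_S(i)$ collapses these to $\log_2(\rho h_S/\lambda)$ and $\rho/\lambda - 1/h_S$ on the support $h_S > \lambda/\rho$, and analogously for the relay side: $\log_2(h_R/\lambda)$ and $1/\lambda - 1/h_R$ on $h_R > \lambda$. Outside these regions the corresponding powers vanish and the integrands are zero.

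The second step is the geometric one, namely inverting the decision rule. Define $\phi(h_R) \triangleq \ln(h_R/\lambda)+\lambda/h_R-1$ and $\psi(h_S) \triangleq \rho\ln(\rho h_S/\lambda)+\lambda/h_S-\rho$, so that (\ref{sel-var-d}) becomes $d_i=1$ iff $\{h_R>\lambda,\ h_S\le\lambda/\rho\}$ or $\{h_R>\lambda,\ h_S>\lambda/\rho,\ \phi(h_R)>\psi(h_S)\}$. A direct differentiation shows $\phi'(h_R)=(h_R-\lambda)/h_R^2>0$ on $h_R>\lambda$ with $\phi(\lambda)=0,\ \phi(\infty)=\infty$, and $\psi'(h_S)=(\rho h_S-\lambda)/h_S^2>0$ on $h_S>\lambda/\rho$ with $\psi(\lambda/\rho)=\rho,\ \psi(\infty)=\infty$. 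Thus for each fixed $h_R>\lambda$ there is a unique $L_1(h_R)\ge\lambda/\rho$ solving $\psi(L_1)=\phi(h_R)$, and for each fixed $h_S>\lambda/\rho$ there is a unique $L_2(h_S)\ge\lambda$ solving $\phi(L_2)=\psi(h_S)$. The event $d_i=0$ on the region $\{h_R>\lambda,\ h_S>\lambda/\rho\}$ is then exactly $\{h_S>L_1(h_R)\}$, while $d_i=1$ on the region $\{h_S>\lambda/\rho\}$ with $\gamma_R>0$ is exactly $\{h_R>L_2(h_S)\}$. This yields the four integration regions appearing on the two sides of (\ref{lemma-2-eq}) and in the four summands of (\ref{lemma-2-eq-nova}): the $(0,\lambda)\times(\lambda/\rho,\infty)$ and $(\lambda,\infty)\times(L_1,\infty)$ pieces for the $d_i=0$ (source) contributions, and the $(0,\lambda/\rho)\times(\lambda,\infty)$ and $(\lambda/\rho,\infty)\times(L_2,\infty)$ pieces for the $d_i=1$ (relay) contributions.

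The final step is closed-form inversion of $\psi(L_1)=\phi(h_R)$ and $\phi(L_2)=\psi(h_S)$ into the Lambert $W$ expressions in (\ref{L_1}). For $L_1$ the substitution $u=\lambda/(\rho L_1)$ brings $\psi(L_1)=\phi(h_R)$ to the form $u e^{-u} = -e^{(h_R-\lambda)/(\rho h_R)-1}(\lambda/h_R)^{1/\rho}$ after exponentiating; the defining relation $W(z)e^{W(z)}=z$ then gives $-u=W(\cdot)$ and hence the stated formula for $L_1$. The computation for $L_2$ is analogous with the roles of the two links swapped. Finally, that the common value of the two sides of (\ref{lemma-2-eq}) equals $\tau_{\max}$ is immediate from (\ref{max-tau}) applied to the right-hand integral. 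This completes the plan; the principal obstacle is purely bookkeeping, namely verifying that the monotonicity of $\phi$ and $\psi$ rules out spurious branches so that the Lambert $W$ inversion is single-valued on the relevant domains.
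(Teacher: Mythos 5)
Your plan follows essentially the same route as the paper's proof: replace the time averages in C1 and C3 by expectations via ergodicity, identify the regions of the $(h_S,h_R)$-plane where $(1-d_i)\gamma_S(i)$ and $d_i\gamma_R(i)$ are nonzero using (\ref{power-eq-2a})--(\ref{sel-var-d}), and integrate over them, with your monotonicity and Lambert-$W$ inversion details correctly filling in steps the paper leaves implicit. The only slip is the boundary value $\psi(\lambda/\rho)$, which equals $0$ rather than $\rho$ (this is in fact what guarantees a solution $L_1(h_R)\ge\lambda/\rho$ exists for every $h_R>\lambda$, consistent with $\phi(\lambda)=0$); otherwise the argument is correct.
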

\begin{proof}
Please refer to Appendix \ref{proof_of_lemma_2}.
\end{proof}
To find $\lambda_{\rm opt}$ and $\rho_{\rm opt}$ meeting (\ref{lemma-2-eq}) and (\ref{lemma-2-eq-nova}), a two dimensional search over all $\lambda>0$ and $\rho>0$ has to be conducted or built-in root-finding functions of commercially available software such as Mathematica can be used.
The optimal $\lambda$ and $\rho$ can be found offline since (\ref{lemma-2-eq}) and (\ref{lemma-2-eq-nova}) only depend on the statistical properties of the $\mathcal{S}$-$\mathcal{R}$ and the $\mathcal{R}$-$\mathcal{D}$ links. Since these statistical properties change on a much slower time scale than
 the instantaneous channel gains, $\lambda_{\rm opt}$ and $\rho_{\rm opt}$ can be updated with a low rate.
\subsection{Rayleigh Fading}\label{sec-th-pa-1} 
For the  special case of Rayleigh fading with $f_{h_S}(h_S)=e^{-h_S/\bar\Omega_S}/\bar\Omega_S$ and $f_{h_R}(h_R)=e^{-h_R/\bar\Omega_R}/\bar\Omega_R$, (\ref{lemma-2-eq})
can be simplified to
 \begin{eqnarray}\label{ra1}
&&\frac{1}{\ln(2)} \bigg[
\left(1-e^{-\lambda/\bar\Omega_R}\right) E_1\left(\frac{\lambda}{\rho \bar\Omega_S}\right)
+ \int_{\lambda}^\infty
\bigg\{e^{-L_1/\bar\Omega_S}\ln\left(\frac{\rho L_1}{\lambda}\right)+ E_1\left(\frac{L_1}{\bar\Omega_S}\right)\bigg\}\frac{e^{-h_R/\bar\Omega_R}}{\bar\Omega_R} d h_R\bigg]\nonumber\\
 &&=
 \frac{1}{\ln(2)} \bigg[
\left(1-e^{-\lambda/(\rho\bar\Omega_S)}\right) E_1\left(\frac{\lambda}{\bar\Omega_R}\right)
+ \int_{\lambda/\rho}^\infty
\bigg\{e^{-L_2/\bar\Omega_R}\ln\left(\frac{L_2}{\lambda}\right)+ E_1\left(\frac{L_2}{\bar\Omega_R}\right)\bigg\}\frac{e^{-h_S/\bar\Omega_S}}{\bar\Omega_S} d h_S\bigg]  ,
\end{eqnarray}
and (\ref{lemma-2-eq-nova}) can be simplified to
\begin{eqnarray}\label{ra2}
&&
\left(1-e^{-\lambda/\bar\Omega_R}\right)\left(\frac{\rho}{\lambda}  e^{-\lambda/(\rho\bar\Omega_S)}-\frac{E_1\left(\frac{\lambda}{ \rho \bar\Omega_S}\right)}{\bar\Omega_S}\right)
+ \int_{\lambda}^\infty
\left\{\frac{\rho}{\lambda}  e^{-L_1/ \bar\Omega_S}-\frac{E_1\left(\frac{L_1}{ \bar\Omega_S}\right)}{\bar\Omega_S}\right\}\frac{e^{-h_R/\bar\Omega_R}}{\bar\Omega_R} d h_R\nonumber\\
 &&+
 \left(1-e^{-\lambda/(\rho\bar\Omega_S)}\right)\left(\frac{1}{\lambda}  e^{-\lambda/\bar\Omega_R}-\frac{E_1\left(\frac{\lambda}{  \bar\Omega_R}\right)}{\bar\Omega_R}\right)
+ \int_{\lambda/\rho}^\infty
\left\{\frac{1}{\lambda}  e^{-L_2/ \bar\Omega_R}-\frac{E_1\left(\frac{L_2}{ \bar\Omega_R}\right)}{\bar\Omega_R}\right\}\frac{e^{-h_S/\bar\Omega_S}}{\bar\Omega_S} d h_S
=\Gamma,\qquad
\end{eqnarray}
where $L_1$ and $L_2$ are given in (\ref{L_1}) and the maximum throughput is given by the left (and right) hand side of equation (\ref{ra1}).
\section{Delay Limited Transmission}\label{sec-delay}
So far, we have assumed that there is no delay constraint and that the size of the buffer at the relay is infinite. In practice, there is usually some constraint on the delay and on the buffer size.
In this section, we investigate how these constraints affect the performance of the proposed relaying protocol. For simplicity, we assume fixed transmit powers, i.e., $P_S(i)=P_S$, $P_R(i)=P_R$, $\forall i$, and
consequently, policy (\ref{kraj}) is used for link selection.

Since we assume that the source is backlogged and has always information to transmit, for the considered three-node network, the transmission delay is caused only by the buffer at the relay. Let $T(i)$ denote the 
delay of a bit that is transmitted by the source in time slot $i$ and received at the destination in time slot $i+T(i)$, i.e., the considered bit is stored for $T(i)$ time slots in the buffer. Then, according to Little's law \cite{little} 
the average delay $E\{T(i)\}$ (i.e., the average time that a bit is stored in the buffer) is given by
\begin{eqnarray}\label{delay-main}
    E\{T(i)\}=E\{Q(i)\}/A,
\end{eqnarray}
where $E\{Q(i)\}$ is the average queue length at the buffer and $A$ is the average arrival rate into the queue. $E\{Q(i)\}$ is given in bits and $A$ is given in bits/slot. Thus, the average delay $E\{T(i)\}$ is given in time slots.
From (\ref{delay-main}), we observe that the delay can be controlled via the arrival rate and the queue size. In the following, we will present two different approaches to adjust the arrival rate and the queue size. The first approach is to ``starve" the buffer, i.e., we 
intentionally limit the arrival rate by choosing $\rho<\rho_{\rm opt}$. The second approach is to limit the buffer size by forcing the relay to transmit if the buffer gets full. 
\subsection{Satisfying an Average Delay Constraint by ``Starving" the Buffer}\label{sec-delay-1}
Starving the buffer is a common approach for limiting average delays in queueing systems \cite{neely-2,gross-harris}. In our case, we can decrease the average arrival rate by selecting $\rho<\rho_{\rm opt}$ which leads to $E\{(1-d_i) S(i)\}<E\{d_i R(i)\}$. In the following
theorem, we establish an upper bound for the resulting average delay.
\begin{theorem}\label{theorem8}
Let $\rho<\rho_{\rm opt}$ in (\ref{kraj}) such that $\xi=E\{(1-d_i) S(i)\}/E\{d_i R(i)\}<1$. In this case, assuming slot-by-slot uncorrelated fading the average delay in slots is bounded by 
\begin{eqnarray}\label{eq-f1}
    E\{T(i)\}\leq \frac{1}{2}\frac{1}{E\{(1-d_i) S(i)\}}\frac{E\{(1-d_i) S^2(i)\}+\xi (2-\xi) E\{ d_i R^2(i) \} }{E\{d_i R(i)\} - E\{(1-d_i) S(i)\}}
\end{eqnarray}
and the throughput is given by $\tau=E\{(1-d_i) S(i)\}$.
\end{theorem}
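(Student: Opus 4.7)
The plan is to combine Little's law with a quadratic Lyapunov drift argument on the stationary queue length. Since the starvation condition $\rho<\rho_{\rm opt}$ makes $\xi<1$, the arrival rate strictly undercuts the departure capacity, the queue is stable, and in steady state every arrived bit is eventually served, so $\tau=E\{(1-d_i)S(i)\}$. Little's law (\ref{delay-main}) then reduces the problem to upper-bounding $E\{Q(i)\}$.

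Abbreviating $a_i=(1-d_i)S(i)$ and $b_i=d_iR(i)$ (so that $a_ib_i=0$ by the half-duplex constraint), equations (\ref{s_2})--(\ref{eq4}) give the Lindley-type recursion $Q(i)=\max\{0,Q(i-1)+a_i-b_i\}$. I would introduce the wasted-service slack $u_i=Q(i)-Q(i-1)-a_i+b_i\geq 0$; since $u_i>0$ only when the buffer is drained to zero, complementary slackness $u_iQ(i)=0$ holds, and squaring the recursion yields the clean identity $Q(i)^2=(Q(i-1)+a_i-b_i)^2-u_i^2$. Taking expectations in steady state ($E\{Q(i)^2\}=E\{Q(i-1)^2\}$), using the slot-by-slot uncorrelated fading assumption to conclude that $Q(i-1)$ is independent of $(a_i,b_i)$, and using $a_ib_i=0$, yields the drift identity
\begin{equation}
2E\{Q\}\bigl(E\{b_i\}-E\{a_i\}\bigr)=E\{a_i^2\}+E\{b_i^2\}-E\{u_i^2\}.
\end{equation}

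The main technical obstacle is the final step: sharpening this identity into the claimed bound. Simply dropping $-E\{u_i^2\}\leq 0$ recovers only the classical Kingman bound, which corresponds to replacing $\xi(2-\xi)$ by $1$ and is strictly weaker than the theorem's claim. To recover the $\xi(2-\xi)$ factor I would establish the lower bound $E\{u_i^2\}\geq(1-\xi)^2E\{b_i^2\}$, which is strictly stronger than the Jensen inequality $E\{u_i^2\}\geq E\{u_i\}^2=(1-\xi)^2E\{b_i\}^2$ that one would get just from throughput balance $E\{u_i\}=E\{b_i\}-E\{D_i\}=(1-\xi)E\{b_i\}$. The key structural observation I would exploit is that, conditioned on $d_i=1$, $u_i=\max\{0,R(i)-Q(i-1)\}$ with $R(i)$ and $Q(i-1)$ independent, so that $u_i$ concentrates precisely on the large values of $b_i$; together with the pointwise constraint $0\leq u_i\leq b_i$ and the fixed mean $(1-\xi)E\{b_i\}$, a rearrangement-style inequality on the joint law of $(u_i,b_i)$ should deliver the required tightening. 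Substituting the sharpened lower bound back into the drift identity and dividing by $\tau=E\{a_i\}$ via Little's law then produces the stated bound on $E\{T(i)\}$.
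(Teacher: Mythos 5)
Your overall architecture is the same as the paper's: stability plus Little's law gives $\tau=E\{(1-d_i)S(i)\}$ and reduces the problem to bounding $E\{Q(i)\}$; the Lindley recursion $Q(i)=\max\{0,Q(i-1)+a_i-b_i\}$ is squared, the complementary-slackness term is used, and stationarity plus independence of $Q(i-1)$ from $(a_i,b_i)$ (valid under slot-by-slot uncorrelated fading, since $d_i$ depends only on the current $s(i),r(i)$) yields exactly the drift identity the paper states as Kingman's equality. You also correctly identify that the entire content of the theorem beyond the plain Kingman bound is the lower bound $E\{u_i^2\}\geq(1-\xi)^2E\{b_i^2\}$ on the squared unused service. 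The paper does not prove this inequality either: it imports it (together with the resulting queue bound) from Daley's work, and then specializes using $a_ib_i=0$.

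The genuine gap is that your proposed self-contained derivation of that inequality does not go through from the ingredients you list. The constraints $0\leq u_i\leq b_i$ pointwise and $E\{u_i\}=(1-\xi)E\{b_i\}$ alone do \emph{not} imply $E\{u_i^2\}\geq(1-\xi)^2E\{b_i^2\}$: take $b$ equal to $1$ or $10$ each with probability $1/2$ and $u\equiv 1/2$; then $E\{u^2\}=1/4$ while $(E\{u\}/E\{b\})^2E\{b^2\}=(0.5/5.5)^2\cdot 50.5\approx 0.417$, so the inequality fails. Hence the ``rearrangement-style inequality'' cannot rest only on the support constraint and the fixed mean; it must genuinely use the positive association you allude to, namely that conditioned on $d_i=1$ one has $u_i=\max\{0,R(i)-Q(i-1)\}$ with $Q(i-1)\geq 0$ independent of $R(i)$, so that $u_i$ is a monotone function of $b_i$ for each fixed queue value. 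Carrying that coupling argument through (essentially reproving Daley's tail-moment inequality) is the missing step; as written, ``should deliver the required tightening'' is precisely where the theorem's improvement over Kingman lives, and it is left unproven. The shortest repair is to do what the paper does and invoke the known result from the queueing literature; otherwise you must supply the conditional monotonicity argument explicitly.
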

\begin{proof}
Please refer to Appendix \ref{app_D}.
\end{proof}
\begin{remark}\label{remark9}
Exploiting  (\ref{eq-f1}) the required value of $\rho<\rho_{\rm opt}$ to ensure a desired average delay can be found. For example, assuming $\mathcal{F}(x)=x$ and Rayleigh distributed $\mathcal{S}$-$\mathcal{R}$ and $\mathcal{R}$-$\mathcal{D}$
channel gains the expected values required in (\ref{eq-f1}) can be obtained as
\begin{eqnarray}\label{eq-r11}
   &&\hspace{-14mm} E\{(1-d_i)S(i)\}= \frac{1}{\ln(2)}\left[ \exp\left(\frac{1}{  \Omega_S } \right) E_1\left(\frac{1}{  \Omega_S } \right) -
\frac{ \Omega_R}{\Omega_R+\rho \Omega_S} 
\exp\left(\frac{\Omega_R+\rho \Omega_S}{ \Omega_S \Omega_R}\right) E_1
\left(\frac{\Omega_R+\rho \Omega_S}{ \Omega_S \Omega_R} \right) 
\label{e1}\right]\\
 &&\hspace{-14mm} E\{d_iR(i)\}=\frac{1}{\ln(2)} \left[ \exp\left(\frac{1}{  \Omega_R }\right) E_1 \left(\frac{1}{  \Omega_R }\right)- \frac{\rho \Omega_S}{\Omega_R+\rho \Omega_S} 
\exp\left(\frac{\Omega_R+\rho \Omega_S}{ \rho \Omega_S \Omega_R} \right)
E_1 \left(\frac{\Omega_R+\rho \Omega_S}{ \rho \Omega_S \Omega_R} \right)
\label{e2}\right]\\
  &&\hspace{-14mm}  E\{(1-d_i) S^2(i)\}=\int_0^\infty\left[\int_{r/\rho}^\infty (\log_2(1+s))^2\frac{e^{-s/\Omega_S}}{\Omega_S} ds\right] \frac{e^{-r/\Omega_R}}{\Omega_R} dr
    \label{e3}\\
&&\hspace{-14mm} E\{d_i R^2(i)\}=\int_0^\infty\left[\int_{s \rho}^\infty (\log_2(1+r))^2\frac{e^{-r/\Omega_R}}{\Omega_R} dr\right] \frac{e^{-s/\Omega_S}}{\Omega_S} ds.\label{e4}
\end{eqnarray}
Inserting now (\ref{e1})-(\ref{e4}) into (\ref{eq-f1}), the value of $\rho$ guaranteeing a certain average delay can be found by slowly increasing $\rho$ from zero until the right hand side of  (\ref{eq-f1}) equals the desired average delay $E\{T(i)\}$. 
\end{remark}

If the buffer size is limited, there is a non-zero probability that the bits arriving into the buffer have to be dropped because the buffer is full, even if the buffer is starved. However, the probability of this event happening can be minimized by properly
choosing the buffer size compared to the desired average delay. This issue is addressed in the following lemma.
\begin{lemma}\label{lemma3}
Denote the maximum queue size by $Q_{\max}$. Then, we can bound the probability that the queue in the buffer exceeds $Q_{\max}$ as
\begin{eqnarray}\label{drop}
   {\rm Pr}\{Q(i)>Q_{\max}\}\leq E\{Q(i)\}/Q_{\max}.
\end{eqnarray} 
\end{lemma}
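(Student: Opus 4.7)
The plan is very short: this is a direct application of Markov's inequality. The key observation is that the queue length $Q(i)$ is, by construction (cf.~(\ref{eq2}) and (\ref{eq4})), a non-negative random variable, since (\ref{3-lit}) prevents the relay from ever transmitting more bits than are currently stored in the buffer. Hence $Q(i)\geq 0$ with probability one.

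First I would state Markov's inequality: for any non-negative random variable $X$ and any constant $a>0$,
\begin{eqnarray*}
\Pr\{X>a\}\leq E\{X\}/a.
\end{eqnarray*}
Then I would set $X=Q(i)$ and $a=Q_{\max}$, which yields precisely (\ref{drop}). For completeness, one could include a one-line derivation of Markov's inequality itself, writing $E\{Q(i)\} = \int_0^\infty q\,f_{Q(i)}(q)\,dq \geq \int_{Q_{\max}}^\infty q\,f_{Q(i)}(q)\,dq \geq Q_{\max}\int_{Q_{\max}}^\infty f_{Q(i)}(q)\,dq = Q_{\max}\Pr\{Q(i)>Q_{\max}\}$, and then divide by $Q_{\max}$.

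There is no real obstacle here; the only subtlety is making explicit that $Q(i)\geq 0$ always holds, which is guaranteed by the queue dynamics in (\ref{eq4}) together with the $\min\{\cdot\}$ in (\ref{3-lit}). Note also that the result is distribution-free: it holds regardless of the statistics of $s(i)$ and $r(i)$ and regardless of the link selection policy, which is why Lemma~\ref{lemma3} can be combined with the average delay bound of Theorem~\ref{theorem8} (via Little's law $E\{Q(i)\}=A\cdot E\{T(i)\}$) to translate an average-delay specification into a buffer-size guideline.
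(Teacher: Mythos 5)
Your proof is correct and matches the paper's approach exactly: the paper also proves Lemma~\ref{lemma3} by a direct application of Markov's inequality, and your added verification that $Q(i)\geq 0$ follows from (\ref{3-lit}) and (\ref{eq4}) is a sound (if implicit in the paper) justification for its applicability.
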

\begin{proof}
The proof follows directly from Markov's inequality.
\end{proof}
We can guarantee any prescribed probability of dropped bits, ${\rm Pr}\{Q(i)>Q_{\max}\}$, by selecting appropriate values for $\rho$ and $Q_{\max}$ based on (\ref{drop}) and (\ref{eq_d3}) given in Appendix \ref{app_D}. The resulting throughput is given by 
\begin{eqnarray}
    \tau&=&E\{(1-d_i) S(i)|Q(i)<Q_{\max}\}  {\rm Pr}\{Q(i)<Q_{\max}\}
=E\{(1-d_i) S(i)\}  {\rm Pr}\{Q(i)<Q_{\max}\}\nonumber\\
&=&E\{(1-d_i) S(i)\} (1-  {\rm Pr}\{Q(i)>Q_{\max}\}),
\label{tau}
\end{eqnarray}
for which a lower bound can be found by combining (\ref{drop}), (\ref{tau}), and (\ref{eq_d3}).
\subsection{Satisfying the Delay Constraint by Limiting the Queue Size}\label{sec-delay-2}
For the scheme proposed in the previous subsection dropped bits are unavoidable. In this subsection, we propose an alternative approach which allows us to avoid dropped bits. Let the buffer size again be limited to $Q_{\max}$ bits. 
The proposed scheme employs the following link selection protocol:
\begin{enumerate}
\item  If  $Q_{\max}-Q(i-1)>S(i)$, select $d_i$ based on  (\ref{kraj}).
\item Otherwise, set $d_i=1$.
\end{enumerate}
Hence, if there is enough room in the buffer to accommodate the bits possibly sent from the source to the relay, the link selection protocol introduced in Section \ref{var-r-sol} is employed. On the other hand,
if there exists the possibility of a buffer overflow, the relay transmits to reduce the amount of data in the buffer. 

\begin{remark}
Although conceptually simple, a theoretical analysis of the throughput of the queue size limiting protocol is difficult. In contrast to the buffer starving protocol discussed in Section \ref{sec-delay-1}, for the queue size limiting protocol,
the average arrival rate $A$, depends on the frequency with which the buffer has to be emptied due to a full queue.  The frequency of these events depends in turn on the average arrival rate. This mutual dependence of average arrival rate and emptying
the buffer makes a meaningful theoretical analysis difficult. Thus, we will resort to simulations to evaluate the performance of the queue size limiting protocol in Section \ref{numerics}.
\end{remark}
\begin{remark}
We note that both proposed protocols for the delay constrained case are heuristic in nature. The search for other protocols with possibly superior performance is an interesting topic for future work. The proposed protocols for the delay constrained and the delay
unconstrained case can serve as benchmark and performance upper bound for these new protocols, respectively. 
\end{remark}
\section{Numerical and Simulation Results}\label{numerics}
In this section, we evaluate the performance of buffer-aided DF relaying with adaptive link selection and compare it with that of conventional relaying. Throughout this section, we assume  Rayleigh fading. All results shown in this section have been confirmed by computer simulations. However,
the simulations are not shown in all instances for clarity of presentation.
\subsection{Delay Unconstrained Transmission}
First, we assume that there are no delay constraints and investigate the achievable throughputs with and without power allocation. 
\subsubsection{Throughput of Buffer-Aided Relaying with Adaptive Link Selection}
We first consider the case of fixed transmit powers. In Fig.~\ref{fig2}, we show the ratio of the optimal throughput of buffer-aided relaying with adaptive link selection, $\tau_{\rm max}$, and the throughput of conventional relaying with a buffer, $\tau_{\rm conv,2}$, given in (\ref{th-conv}), 
as a function of $\Omega_R/\Omega_S$ for several different values of $\Omega_S$. The corresponding optimal decision thresholds, $\rho_{\rm opt}$, for buffer-aided relaying with adaptive link selection are shown in Fig.~\ref{fig3}.
For buffer-aided relaying with adaptive link selection,  we considered the decision functions $\mathcal{F}(x)=x$ and $\mathcal{F}(x)=\log_2(1+x)$ and calculated the corresponding throughputs based on (\ref{eq_to}) and (\ref{eq_to_log}), respectively. 
The optimal decision thresholds were obtained from (\ref{find_rho}) and (\ref{find_rho_log}), respectively. Clearly, buffer-aided relaying with adaptive link selection leads to substantial throughput gains compared to conventional relaying. Both considered decision functions lead  
to very similar performances, although at very high ratios $\Omega_R/\Omega_S$ the optimal decision function $\mathcal{F}(x)=\log_2(1+x)$ yields a small throughput gain. The ratio $\tau_{\rm max}/\tau_{\rm conv,2}$ approaches two as $\Omega_R/\Omega_S\to 0$ 
and $\Omega_R/\Omega_S\to \infty$. For $\Omega_R/\Omega_S\to 0$, the source-relay link is selected very rarely for transmission (as $\rho_{\rm opt}\to 0$) since comparatively large amounts of data can be transferred to the relay in a single time slot. 
Thus, the relay can almost always transmit as compared to half of the time in conventional relaying. On the other hand, for $\Omega_R/\Omega_S\to \infty$, it is the relay-destination channel that is used very rarely as $\rho_{\rm opt}\to \infty$
and the source can transmit almost all the time, which results in twice the throughput compared to conventional relaying.
\subsubsection{Throughput with Power Allocation}
In Figs.~\ref{fig4b} and \ref{fig4c}, we investigate the gains achievable with power allocation (PA) for a system with $\bar\Omega_S=0.1$ and $\bar\Omega_R=1.9$. Thereby, we compare the performances of buffer-aided relaying with adaptive 
link selection and conventional relaying with and without a buffer. For buffer-aided relaying with adaptive link selection and power allocation the throughput, power allocation, and link selection policy were obtained as described in Theorem \ref{theorem4}
and Lemma \ref{lemma2} in Section \ref{s34}. For conventional relaying with buffer, a similar optimal power allocation scheme as for buffer-aided relaying with adaptive link selection was adopted with variable powers $\gamma_j(i)=\max\{0,1/\alpha-1/ h_j(i)\}$, $j\in\{S,R\}$, 
where $\alpha$ is chosen such that $E\{\gamma_j(i)\}=\Gamma$. For comparison, in Figs.~\ref{fig4b} and \ref{fig4c}, we also
show the performance of both considered relaying schemes without power allocation, i.e., we set $\gamma_S(i)=\gamma_R(i)=\Gamma$. Furthermore, to highlight the gain compared to conventional relaying without buffer and without power allocation,
in Fig.~\ref{fig4b}, we normalized the throughput with respect to  $\tau_{\rm conv,1}$ as given by (\ref{t-c-1}). Figs.~\ref{fig4b} and \ref{fig4c} show that optimal power allocation can improve performance of both buffer-aided relaying with adaptive
link selection and conventional relaying. For example, for $\Gamma=0$ dB  buffer-aided relaying with adaptive link selection and power allocation leads to a throughput gain of 95 \% compared to conventional relaying with buffer and power allocation. 
Nevertheless, the gain achievable by adaptive link selection is more
significant than the gain from power allocation. For example, at $\Gamma=20$ dB, adaptive link selection yields a throughput gain of 1 bit/slot compared to conventional relaying with buffer. 
\subsection{Delay Constrained Transmission}
We now turn our attention to delay limited transmission and investigate the performance of the two proposed protocols for this case. In the following we assume uncorrelated fading. Furthermore, we assume fixed transmit powers for the source and relay, and adopt the suboptimal decision function $\mathcal{F}(x)=x$.
\subsubsection{Starving the Buffer}
In Fig.~\ref{fig5}, we show the ratio of the  throughput of buffer-aided relaying with starved buffer, $\tau_{\rm max}$,  and the throughput of conventional relaying without buffer, $\tau_{\rm conv,1}$, as given in (\ref{t-c-1}),  
as a function of the upper bound on the average delay. The corresponding decision thresholds, $\rho$, are shown in Fig.~\ref{fig6}. For the theoretical results shown in Figs.~\ref{fig5} and \ref{fig6} the throughput and the upper bound 
on the average delay for buffer-aided relaying with starved buffer were obtained based on Theorem \ref{theorem8} and Remark \ref{remark9}. A comparison of the theoretical results with the simulation results also shown in Fig.~\ref{fig5} 
reveals that the derived upper bound on the delay is tight, especially for large delays. On the other hand, for very small delays, buffer-aided relaying with starved buffer becomes inefficient since the starving of the buffer decreases
the average queue size and increases the probability that the relay is selected for transmission when  $R(i)>Q(i-1)$. In fact, Fig.~\ref{fig5} shows that for very small delays buffer-aided relaying with starved buffer may be even outperformed
by  conventional relaying without buffer.
As expected, the throughput of buffer-aided relaying increases with increasing tolerable delay and $\rho$ approaches the optimal value for the
delay constrained case, $\rho_{\rm opt}$ (computed from (\ref{find_rho})), for large delays.
The required delay to achieve a throughput gain compared to conventional relaying increases with increasing $\Omega_S$-$\Omega_R$ ratio since for large $\Omega_S/\Omega_R$ and small tolerable average delays, the arrival rate into 
the buffer has to be severely limited (i.e, $\rho$ has to be chosen very small) which has a negative impact on the throughput which, by the conservation of flow, is equal to the arrival rate. 

In Fig.~\ref{fig7}, we show the probability of a dropped bit as a function of the buffer size $Q_{\rm max}$ for three different average delays and $\Omega_S=\Omega_R=1$. These results were obtained via simulations since the bound
obtained in (\ref{drop}) is relatively loose due to the looseness of Markov's inequality. Fig.~\ref{fig7} shows that the probability of dropping a bit rapidly decreases with increasing buffer size and decreasing average delay. 
\subsubsection{Limiting the Queue Size} 
In Fig.~\ref{fig8}, we show the throughput achieved by limiting the queue size, $\tau_{\rm limit}$, and the throughput achieved by starving the buffer, $\tau_{\rm starve}$, normalized by the throughput of conventional relaying without buffer, $\tau_{\rm conv,1}$, 
as given in (\ref{t-c-1}), for symmetric and asymmetric link qualities. For comparison, Fig.~\ref{fig8} also contains the throughput of delay unconstrained buffer-aided relaying with adaptive link selection, which constitutes an upper bound for the throughput in the
delay constrained case, and conventional relaying with buffer as proposed in \cite{XFTP08}. For conventional relaying with buffer, the relay drops information bits if the achievable rate of the $\cal S$-$\cal R$ link in the first $N/2$ time slots exceeds the achievable 
rate of the $\cal R$-$\cal D$ link in the second $N/2$ time slots.
All results shown in Fig.~\ref{fig8} have been obtained by simulations. Fig.~\ref{fig8} reveals that the performance of both delay constrained buffer-aided relaying protocols 
is comparable for large delays and approaches that of the delay unconstrained protocol. For small delays, limiting the buffer size yields a higher throughput than starving the buffer. However, both proposed protocols may be outperformed by conventional relaying 
with and without buffer for very small delays as for the proposed simple protocols, the relay may be selected for transmission even if $R(i)>Q(i-1)$. While this event has negligible effect for delay unconstrained transmission since the optimal link selection policy ensures that the queue is sufficiently long such that $R(i)>Q(i-1)$ is avoided (cf.~Remark \ref{remark1}), this is no longer true for the delay constrained case. Therefore, for the delay constrained case, more sophisticated protocols should be developed that take into account that $R(i)>Q(i-1)$ may occur.

\section{Conclusions}\label{conclude}
In this paper, we proposed a novel adaptive link selection protocol for relays with buffers. In contrast to conventional relaying, where the source and the relay transmit according to a pre-defined schedule regardless of the channel state, in the proposed scheme,
always the node with the stronger link is selected for transmission. For delay unconstrained transmission, we derived the optimal link selection policy for the cases of fixed and variable source and relay transmit powers. Remarkably, in both cases, the optimal 
policy for a given time slot only depends on the instantaneous CSI of that time slot and the statistical CSI of the involved links. This makes the optimal policies attractive for implementation. For delay constrained transmission, we proposed two different methods to 
control the delay introduced by the buffer at the relay. Furthermore, for the case when the buffer is starved, we derived upper bounds on the average delay and the number of dropped bits for limited buffer size. Our analytical and simulation results showed 
that buffer-aided relaying with adaptive link selection is a promising approach to significantly increase the throughput compared to conventional relay-assisted transmission. 
Interesting extensions of the presented work include using the considered simple three-node network as a building block for larger networks, studying the impact of imperfect CSI, and deriving the outage probability for fixed rate transmission.
\newpage
\begin{appendix}
\subsection{Proof of Theorem \ref{theorem1}\label{app_A}}
We first note that, because of the law of the conservation of flow, $A\ge \tau$ is always valid and equality holds if and only if the queue is non-absorbing. Assume first we have a link selection policy with average arrival rate $A$ and throughput $\tau$ with $A>\tau$, 
i.e., the queue is absorbing.  For this policy, we denote the set of indices with $d_i=1$ by $\bar I$ and the set of indices with $d_i=0$ by $I$, i.e., for $N\to \infty$ we have
\beq
A=\frac{1}{N}\sum_{i\in I}(1-d_i)S(i)>\tau =\frac{1}{N}\sum_{i\in\bar I}d_i\min\{R(i),Q(i-1)\}.
\label{app1}
\eeq
From (\ref{app1}) we observe that the considered protocol cannot be optimal as it can be improved by moving some of the indices $i$ in $I$ to $\bar I$ which leads to an increase of $\tau$ at the expense of a decrease of $A$. However, once the point $A=\tau$ is reached,
moving more indices $i$ from $I$ to $\bar I$ will decrease both $A$ and $\tau$  because of the conservation of flow. Thus, a necessary condition for the optimal policy is that the queue is non-absorbing but the queue is at the edge of non-absorbtion, i.e., the queue is 
at the boundary of a non-absorbing and an absorbing queue. This completes the proof.
\subsection{Proof of Theorem \ref{theorem2}\label{app_B}}
We denote the sets of indices $i$ for which $d_i=1$ and $d_i=0$ holds by $\bar I$ and $I$, respectively. $\epsilon$ denotes a subset of $\bar I$ and $|\cdot|$ is the cardinality of a set. Throughout the remainder of this proof $N\to\infty$ is assumed.

If the queue in the buffer of the relay is absorbing, $A>\tau$ holds and on average the number of bits arriving at the queue exceed the number of bits leaving the queue. Thus, $R(i)\le Q(i-1)$ holds almost always and as a result the throughput can be written as 
\begin{eqnarray}\label{pr_eq_2}
   \tau= \frac{1}{N}\sum_{i\in \bar I }\min\{R(i),Q(i-1)\} = \frac{1}{N} \sum_{i\in \bar I} R(i) .
\end{eqnarray}

Now,  we assume that the queue is at the edge of non-absorption. That is $A=\tau$ holds but  moving a small fraction $\epsilon$, where $|\epsilon|/N\to 0$,  of indices from $\bar I$ to $I$ will make  the queue an absorbing queue with $A>\tau$. 
For this case, we wish to determine whether or not
\begin{eqnarray}\label{pr_eq_3}
   \frac{1}{N} \sum_{i\in \bar I} R(i)> \tau =\frac{1}{N}\sum_{i\in \bar I }\min\{R(i),Q(i-1)\}= A=\frac{1}{N}\sum_{i\in   I } S(i)
\end{eqnarray}
holds. To test this, we move a small fraction $\epsilon$, where $|\epsilon|/N\to 0$, of indices from $\bar I$ to $I$, thus making the queue an absorbing queue. As a result, (\ref{pr_eq_2}) holds, and (\ref{pr_eq_3}) becomes
\begin{eqnarray}\label{pr_eq_4}
   \frac{1}{N} \sum_{i\in \bar I\backslash\epsilon} R(i)= \tau =\frac{1}{N}\sum_{i\in \bar I\backslash \epsilon }\min\{R(i),Q(i-1)\}< A= \frac{1}{N}\sum_{i\in I \cup \epsilon } S(i).
\end{eqnarray}
From the above we conclude that if (\ref{pr_eq_2}) holds, then based on (\ref{pr_eq_3}) and (\ref{pr_eq_4}), for $|\epsilon|/N\to 0$, we must have
\begin{eqnarray}\label{pr_eq_5}
   \frac{1}{N} \sum_{i\in\bar I} R(i)>  \frac{1}{N}\sum_{i\in   I } S(i)
\end{eqnarray}
and
\begin{eqnarray}\label{pr_eq_6}
   \frac{1}{N} \sum_{i\in \bar I\backslash\epsilon} R(i) <  \frac{1}{N}\sum_{i\in   I \cup \epsilon } S(i).
\end{eqnarray}
However, for (\ref{pr_eq_5}) and (\ref{pr_eq_6}) to jointly hold, we require that the particular considered moving of indices from $\bar I$ to $I$ has caused a discontinuity in $\frac{1}{N} \sum_{i\in \bar I} R(i)$ or/and a discontinuity in $\frac{1}{N}\sum_{i\in   I } S(i)$ as $|\epsilon|/N\to 0$
is assumed. Since the capacities of the $\mathcal{S}$-$\mathcal{R}$ and $\mathcal{R}$-$\mathcal{D}$ links are such that $\lim_{N\to\infty}\sum_{i\in\epsilon}S(i)/N\to 0$ and $\lim_{N\to\infty}\sum_{i\in\epsilon}R(i)/N\to 0$, such discontinuities are not possible. Therefore, at the edge of non-absorption 
(\ref{pr_eq_3}) is not true and we must have instead
\begin{eqnarray}\label{pr_eq_7}
   \frac{1}{N} \sum_{i\in \bar I} R(i)= \tau= \frac{1}{N}\sum_{i\in \bar I }\min\{R(i),Q(i-1)\}=A= \frac{1}{N}\sum_{i\in   I } S(i)
\end{eqnarray}
Using the the ergodicity of $s(i)$ and $r(i)$, (\ref{pr_eq_7}) can be expressed as (\ref{max-solution}), and the throughput  can be written as (\ref{max-tau}). This concludes the proof. 
\subsection{Proof of Theorem \ref{theorem3}\label{app_C}}
The Lagrangian for Problem (\ref{MPR1}) is given by
\begin{eqnarray}\label{MPR2}
    \mathcal{L}=\frac{1}{N}\sum_{i=1}^N d_i R(i) -\mu \frac{1}{N}\sum_{i=1}^N  \left[d_i R(i)-(1-d_i)S(i)\right] - \frac{1}{N}\sum_{i=1}^N \beta_i d_i(1-d_i),
\end{eqnarray}
where $\mu$ and $\beta_i$ are Lagrange multipliers. Differentiating $\mathcal{L}$ with respect to $d_i$ and setting the result to zero leads to
\begin{eqnarray}\label{eq-nekoja1}
    d_i=\frac{(-1+\mu)R(i)+\mu S(i)+\beta_i}{2\beta_i}.
\end{eqnarray}
For $d_i(1-d_i)=0$ to hold, we need either $d_i=0$ or $d_i=1$, which leads to two possible values for $\beta_i$:
\begin{eqnarray}
d_i=0\, \Rightarrow\,    \beta_{i,1}&=&(1-\mu)R(i)-\mu S(i)\\
d_i=1\, \Rightarrow\,  \beta_{i,2}&=&-\beta_{i,1}
\end{eqnarray}
For the maximum of $\mathcal{L}$ in (\ref{MPR2}), $\beta_i<0$, $\forall i$, has to hold. Furthermore, $0<\mu<1$ has to hold since for $\mu\le 0$ and $\mu\ge 1$ we have always $d_i=1$ and $d_i=0$, respectively, irrespective of the values of $R(i)$ and $S(i)$. 
Hence, we have
\begin{eqnarray}\label{link-select-1}
   d_i=\left\{
\begin{array}{cc}
1 & \textrm{if }   (1-\mu)R(i)-\mu S(i) \geq 0\\
0 & \textrm{if } (1-\mu)R(i)-\mu S(i)   < 0
\end{array} 
\right.
\end{eqnarray}
which is identical to (\ref{kraj}) with $\mathcal{F}(x)=\log_2(1+x)$ if we set $\rho=\mu/(1-\mu)$. $\mu$ or equivalently $\rho$ are chosen such that constraint C1 of Problem (\ref{MPR1}) is met. This completes the proof.
\subsection{Proof of Theorem \ref{theorem4}\label{app_pa}}
To solve Problem (\ref{power-eq-3}), we form the Lagrangian
\begin{eqnarray}\label{MPR2a}
    \mathcal{L}=&&\hspace{-5mm}\frac{1}{N}\sum_{i=1}^N d_i \log_2(1+\gamma_R(i) h_R(i)) 
-\mu \frac{1}{N}\sum_{i=1}^N \Big[d_i \log_2(1+\gamma_R(i) h_R(i))
-(1-d_i) \log_2(1+\gamma_S(i) h_S(i))\Big]\nonumber\\
-&&\hspace{-5mm} \nu \frac{1}{N}\sum_{i=1}^N \Big[ (1-d_i) \gamma_S(i)+d_i \gamma_R(i)\Big]
-\frac{1}{N}\sum_{i=1}^N \beta_i d_i(1-d_i),
\end{eqnarray}
where the Lagrange multipliers $\mu$, $\beta_i$, and $\nu$ are chosen such that C1, C2, and C3 are satisfied, respectively.
By differentiating $\mathcal{L}$ with respect to $\gamma_S(i)$,  $\gamma_R(i)$, and $d_i$, and setting the results to zero, we obtain three equations. 
Solving this system of equations for $\gamma_S(i)$,  $\gamma_R(i)$, and $d_i$, and taking into account that  $\beta_i<0$, $0<\mu<1$, and $\nu>0$, we
obtain  (\ref{power-eq-2a}), (\ref{power-eq-2b}), and (\ref{sel-var-d}) after letting $\rho=\mu/(1-\mu)$ and $\lambda=\nu\ln(2)/(1-\mu)$, which are chosen such that 
constraints C1 and  C3 are meet with equality. This completes the proof.
\subsection{Proof of Lemma \ref{lemma2}\label{proof_of_lemma_2}}
Since $s(i)$ and  $r(i)$ are ergodic random processes, for $N\to\infty$, the normalized sums in C1 and C3 in (\ref{power-eq-3}) can be replaced by expectations. 
Therefore, the left hand side of C1 is the expectation of variable $(1-d_i)\log_2(1+\gamma_S(i) h_S(i))$. This variable is nonzero only when both $(1-d_i)$ and $\gamma_S(i)$ are nonzero. 
The domain over which  $(1-d_i)$ and $\gamma_S(i)$ are jointly nonzero can be obtained from (\ref{power-eq-2a}) and (\ref{sel-var-d}) and is given by
\beq
(h_S(i)>\lambda/\rho \textrm{ AND } h_R(i)<\lambda) \textrm{ OR  } (h_S(i)>L_1 \textrm{ AND } h_R(i)>\lambda)
\label{dom1}
\eeq
where $L_1$  is given by (\ref{L_1}). Variable $(1-d_i)\log_2(1+\gamma_S(i) h_S(i))$ has to be integrated over domain (\ref{dom1}) to obtain its average. This leads to  the left side of (\ref{lemma-2-eq}).

Similarly, the right hand side of C1 is the expectation of the variable $d_i\log_2(1+\gamma_R(i) h_R(i))$. This variable is nonzero only when both $d_i$ and $\gamma_R(i)$ are nonzero. The domain over which  $d_i$ 
and $\gamma_R(i)$ are jointly nonzero can be obtained from (\ref{power-eq-2b}) and (\ref{sel-var-d}) and is given by
\beq
(h_R(i)>\lambda  \textrm{ AND } h_S(i)<\lambda/\rho) \textrm{ OR  } (h_R(i)>L_2 \textrm{ AND } h_S(i)>\lambda/\rho)
\label{dom2}
\eeq
where $L_2$  is given by (\ref{L_1}). Variable $d_i\log_2(1+\gamma_R(i) h_R(i))$ has to be integrated over domain (\ref{dom2}) to obtain its average. This leads to the right side of (\ref{lemma-2-eq}).

Following a similar procedure, we can obtain (\ref{lemma-2-eq-nova}) from  C3 in  (\ref{power-eq-3}). This completes the proof.
\subsection{Proof of Theorem \ref{theorem8}\label{app_D}}
For $\xi=E\{(1-d_i) S(i)\}/E\{d_i R(i)\}<1$ the queue is non-absorbing, and thus, because of the law of conservation of flow, the throughput is equal to the arrival rate, i.e., $\tau=E\{(1-d_i) S(i)\}$.

To arrive at an upper bound for the average queue size, we first introduce two auxiliary results from the literature. Let
\begin{eqnarray}\label{q-i}
    q(i)=\max\{q(i-1)-u(i),0\},
\end{eqnarray}
where $u(i)$ is a slot by slot uncorrelated random variable with $E\{u(i)\}>0$. Also, let $a(i)$ and $b(i)$ be non-negative slot by slot uncorrelated random variables with $E\{b(i)\}>E\{a(i)\}$ and set $u(i)=b(i)-a(i)$.
Then, equality \cite{kingman}
\begin{eqnarray}\label{eq_d1a}
  E\{u^2(i)\}-  2 E\{u(i)\} E\{q(i)\}= E\{(\max\{u(i)-q(i),0\})^2\}
\end{eqnarray}
and inequality \cite{daley} 
\begin{eqnarray}\label{eq_d1b}
    E\{(\max\{u(i)-q(i),0\})^2\}\geq \left(1-\frac{E\{a(i)\}}{E\{b(i)\}}\right) E\{b^2(i)\}
\end{eqnarray}
hold. Furthermore, combining (\ref{eq_d1a}) and (\ref{eq_d1b}), the following bound is obtained \cite{daley}
\begin{eqnarray}\label{eq_d1c}
    E\{q(i)\}\leq \frac{1}{2}\frac{E\{(b(i)-a(i))^2\}-(1-\xi)^2 E\{b^2(i)\}}{E\{b(i)\}-E\{a(i)\}},
\end{eqnarray}
where $\xi=E\{a(i)\}/E\{b(i)\}$.

By rewriting the queue size as
\begin{eqnarray}\label{eq_d1}
    Q(i)=\max\left\{Q(i-1)-d_i R(i)+(1-d_i) S(i),0\right\},
\end{eqnarray}
we observe that (\ref{eq_d1}) is in the form of (\ref{q-i}) if we let $q(i)=Q(i)$, $a(i)=(1-d_i)S(i)$, $b(i)=d_i R(i)$ and $\xi=E\{(1-d_i) S(i)\}/E\{d_i R(i)\}$. Thus, assuming that $s(i)$ and $r(i)$ are slot by slot uncorrelated, we can exploit (\ref{eq_d1c}) and upper bound  the average size of the queue as
\begin{eqnarray}\label{eq_d3}
    E\{Q(i)\}&\leq&  \frac{1}{2}\frac{E\{(1-d_i) S^2(i)\}+E\{d_i R^2(i)\}-2E\{(1-d_i)d_i S(i) R(i)\}-(1-\xi)^2 E\{d_i R^2(i)\} }{E\{d_i R(i)\} - E\{(1-d_i) S(i)\}}\nonumber\\
&=& \frac{1}{2} \frac{E\{(1-d_i) S^2(i)\}+\xi (2-\xi) E\{ d_i R^2(i) \} }{E\{d_i R(i)\} - E\{(1-d_i) S(i)\}}.
\end{eqnarray}
Since the average arrival rate is given by $A=E\{(1-d_i) S(i)\}$, we obtain (\ref{eq-f1}) from (\ref{eq_d3}) and Little's law (\ref{delay-main}). This completes the proof.
\end{appendix}
\bibliography{litdab}
\bibliographystyle{IEEETran}

\newpage
\begin{figure}
\includegraphics[width=6in ,height=2in]{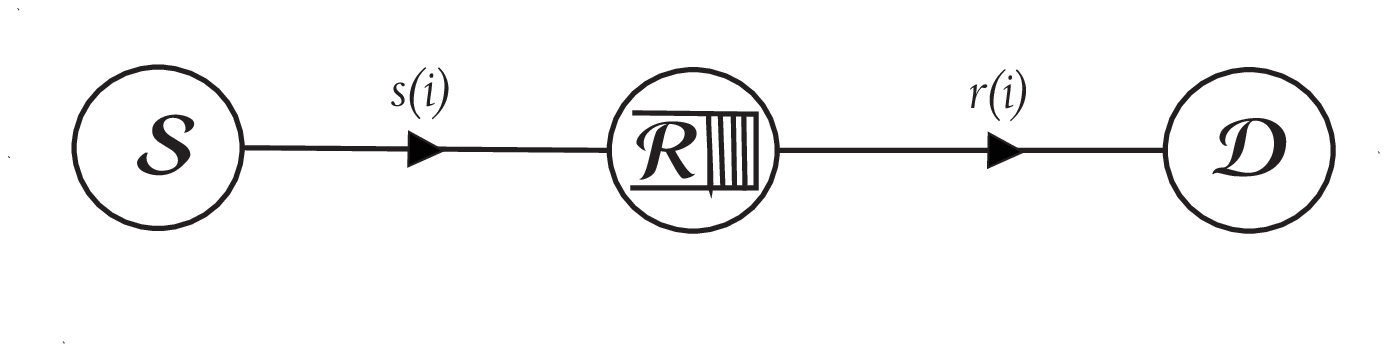}
\caption{System model comprising a source $\mathcal{S}$, a half-duplex relay equipped with a buffer $\mathcal{R}$, and a destination $\mathcal{D}$.  $s(i)$ and $r(i)$ are the instantaneous signal-to-noise ratios (SNRs) 
of the  $\mathcal{S}$-$\mathcal{R}$ and $\mathcal{R}$-$\mathcal{D}$ links in the $i$th time slot, respectively.}  \label{fig1}
\end{figure}
\begin{figure}
\includegraphics[width=7in ,height=5in]{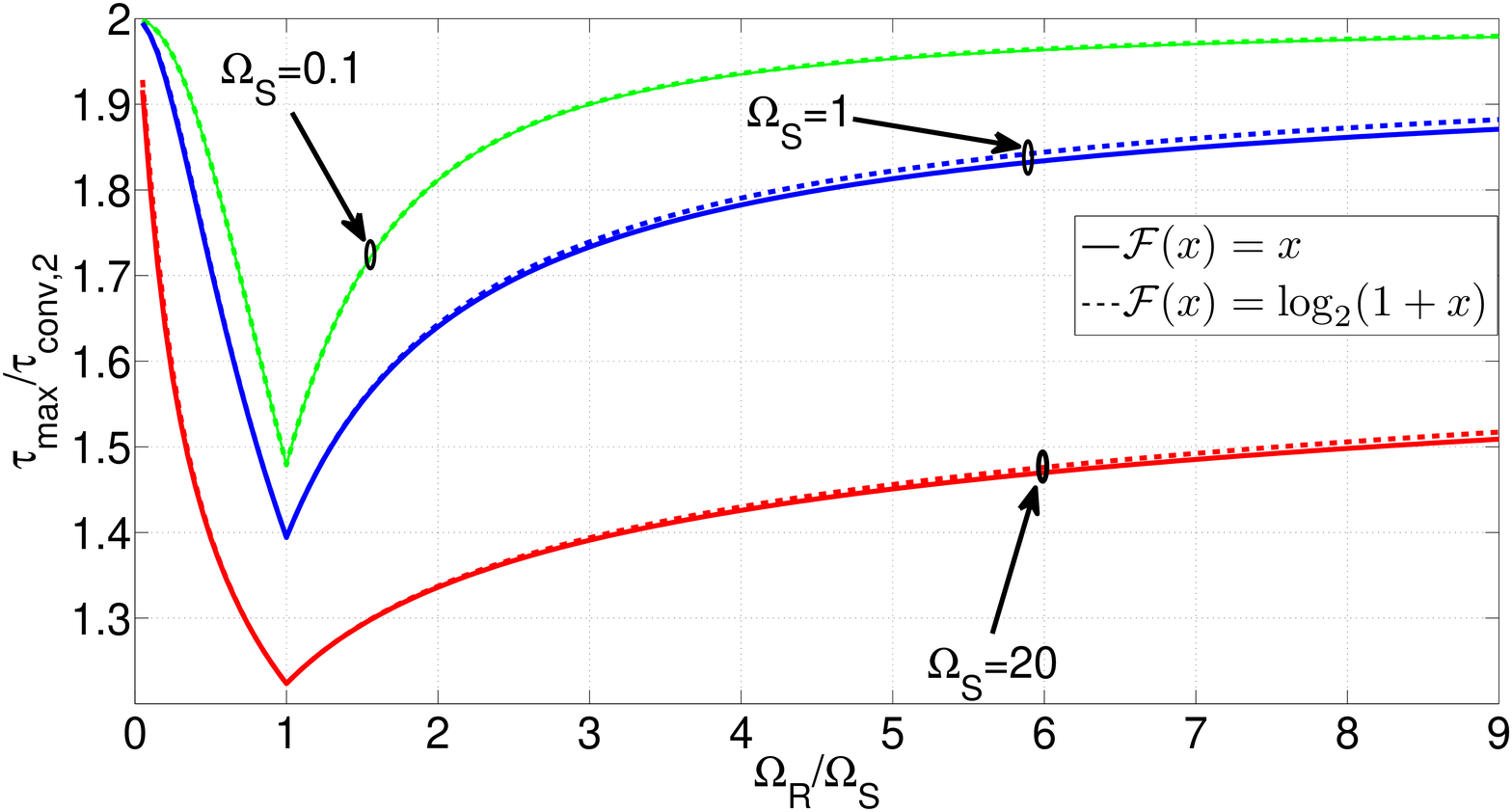}
\caption{Throughput ratio $\tau_{\max}/\tau_{\rm conv,2}$ vs.~$\Omega_R/\Omega_S$ for buffer-aided relaying with adaptive link selection and fixed transmit powers for source and relay.} \label{fig2}
\end{figure}
\begin{figure}
\includegraphics[width=7in ,height=5in]{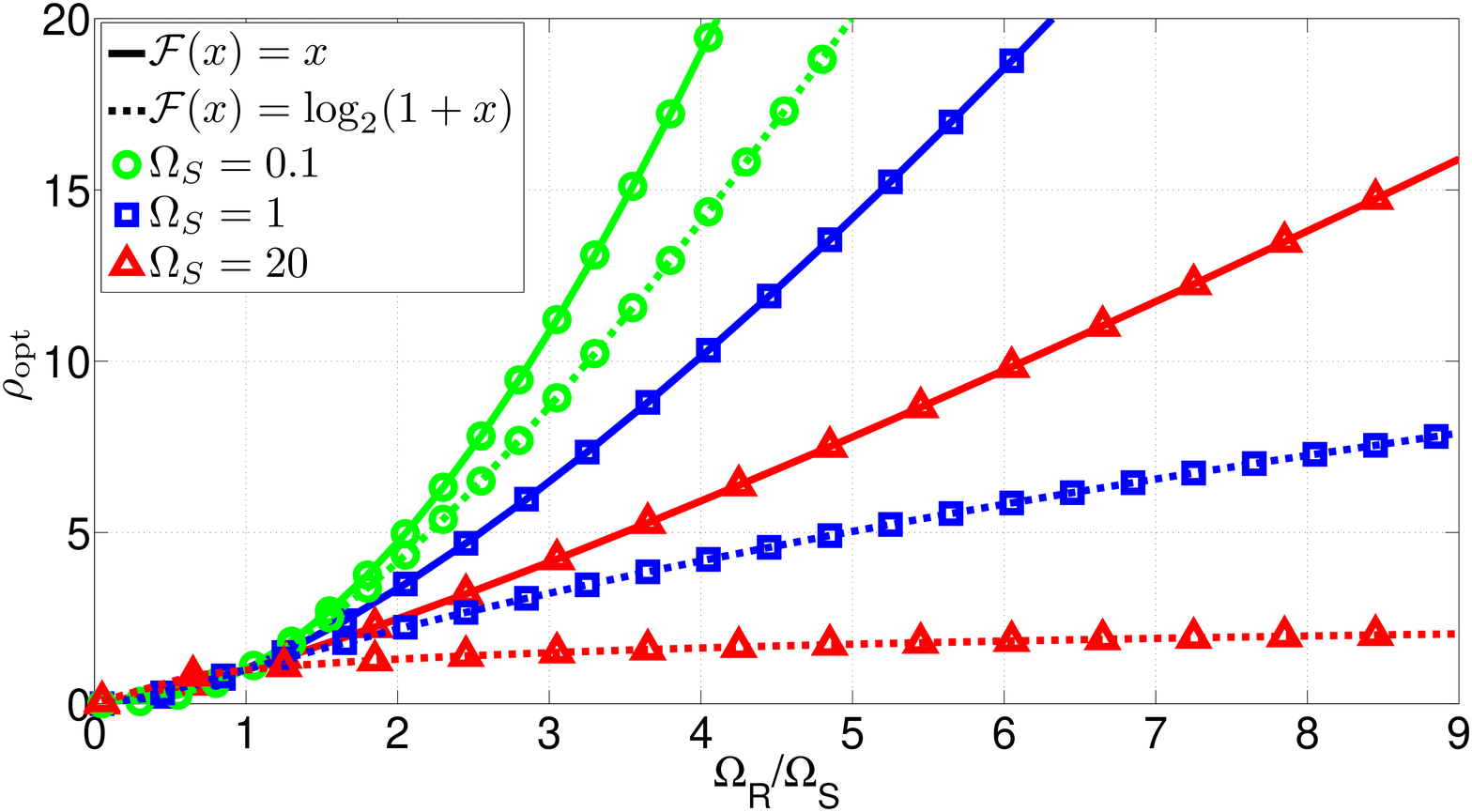}
\caption{Optimal decision threshold $\rho_{\rm opt}$ vs.~$\Omega_R/\Omega_S$ for buffer-aided relaying with adaptive link selection and fixed transmit powers for source and relay.} \label{fig3}
\end{figure}
\begin{figure}
\includegraphics[width=7in ,height=5in]{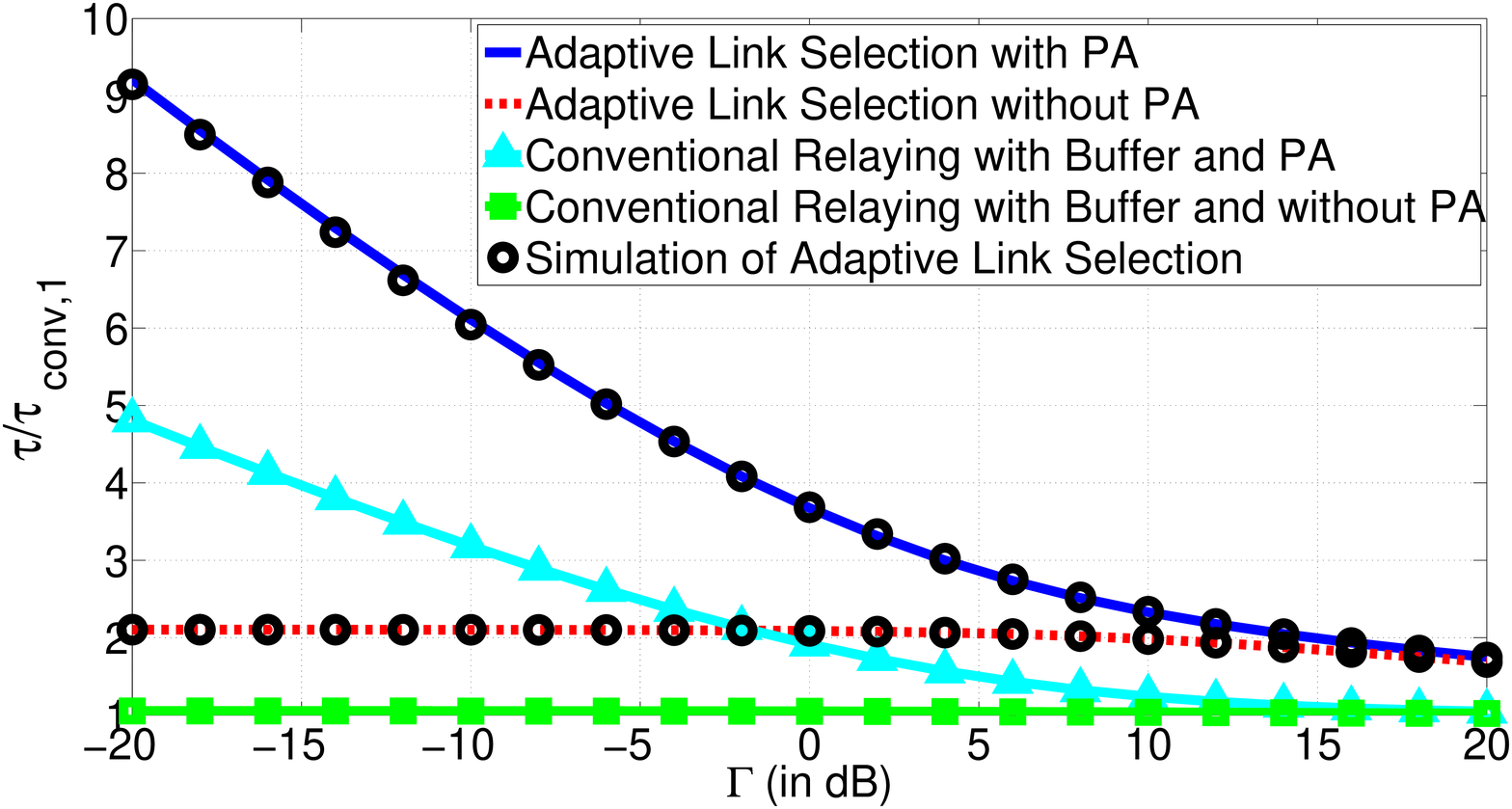}
\caption{Throughput normalized to $\tau_{\rm conv,1}$ vs.~$\Gamma$ for buffer-aided relaying with adaptive link selection and conventional relaying with buffer. The performance of both schemes with and without power allocation (PA) is shown. 
$\bar\Omega_S=0.1$ and $\bar\Omega_R=1.9$} \label{fig4b}
\end{figure}
\begin{figure}
\includegraphics[width=7in ,height=5in]{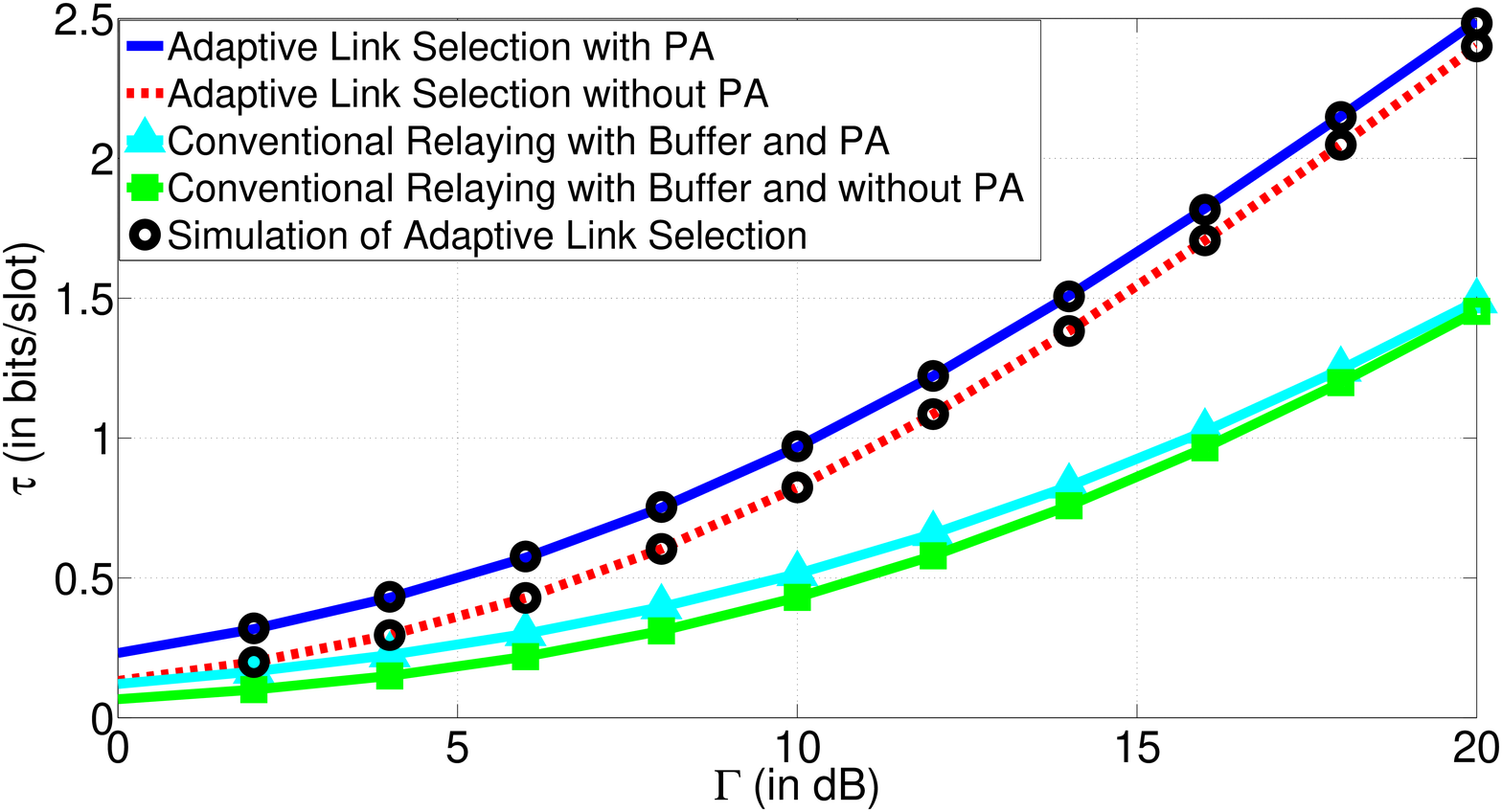}
\caption{Throughput vs.~$\Gamma$ for buffer-aided relaying with adaptive link selection and conventional relaying with buffer. The performance of both schemes with and without power allocation (PA) is shown. 
$\bar\Omega_S=0.1$ and $\bar\Omega_R=1.9$} \label{fig4c}
\end{figure}
\begin{figure}
\includegraphics[width=7in ,height=5in]{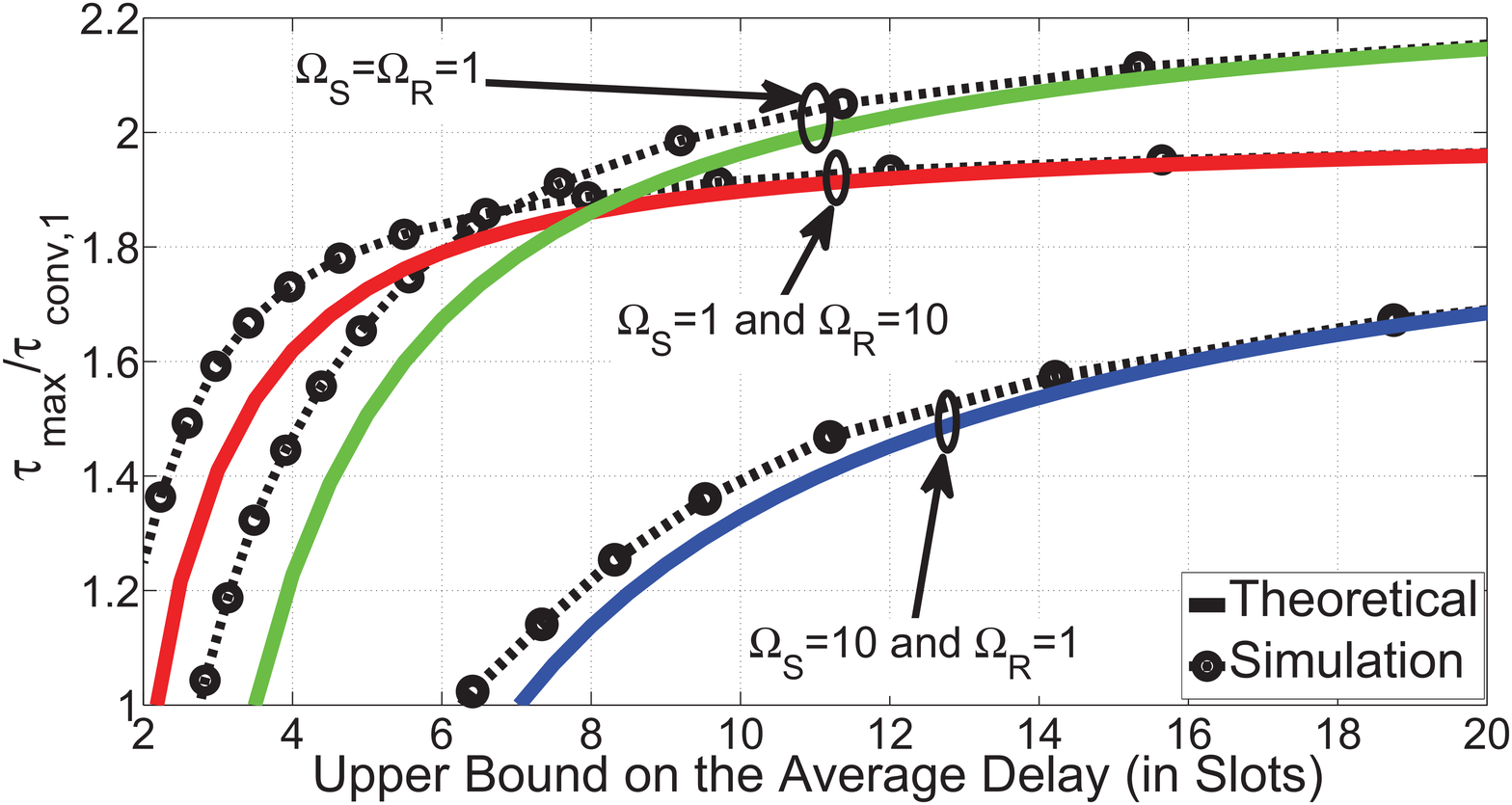}
\caption{Throughput ratio $\tau_{\max}/\tau_{\rm conv,1}$ vs.~upper bound on the average delay for buffer-aided relaying with adaptive link selection and starved buffer. $\mathcal{F}(x)=x$.} \label{fig5}
\end{figure}
\begin{figure}
\includegraphics[width=7in ,height=5in]{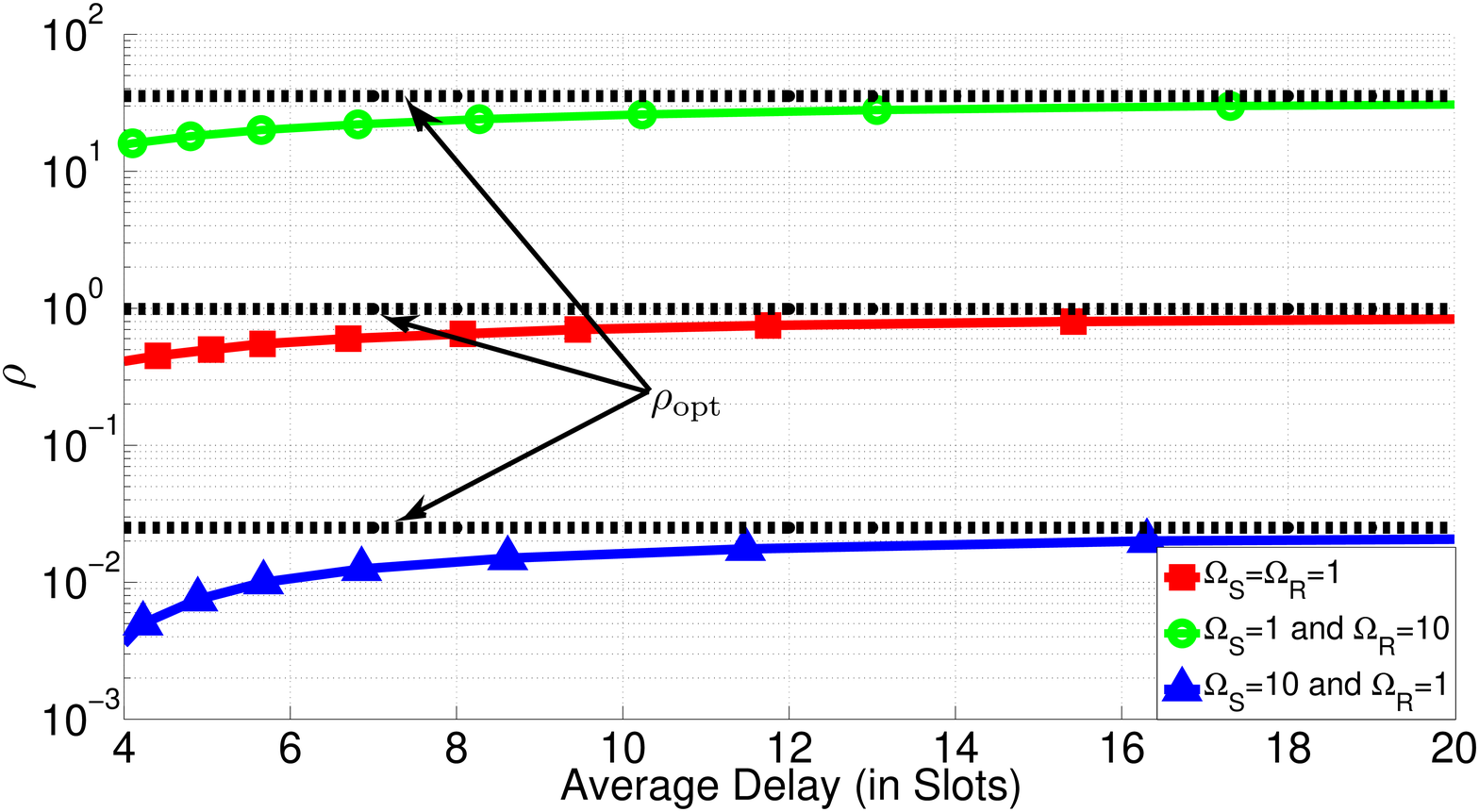}
\caption{Decision threshold $\rho$ vs.~upper bound on the average delay for buffer-aided relaying with adaptive link selection and starved buffer. $\mathcal{F}(x)=x$.} \label{fig6}
\end{figure}
\begin{figure}
\includegraphics[width=7in ,height=5in]{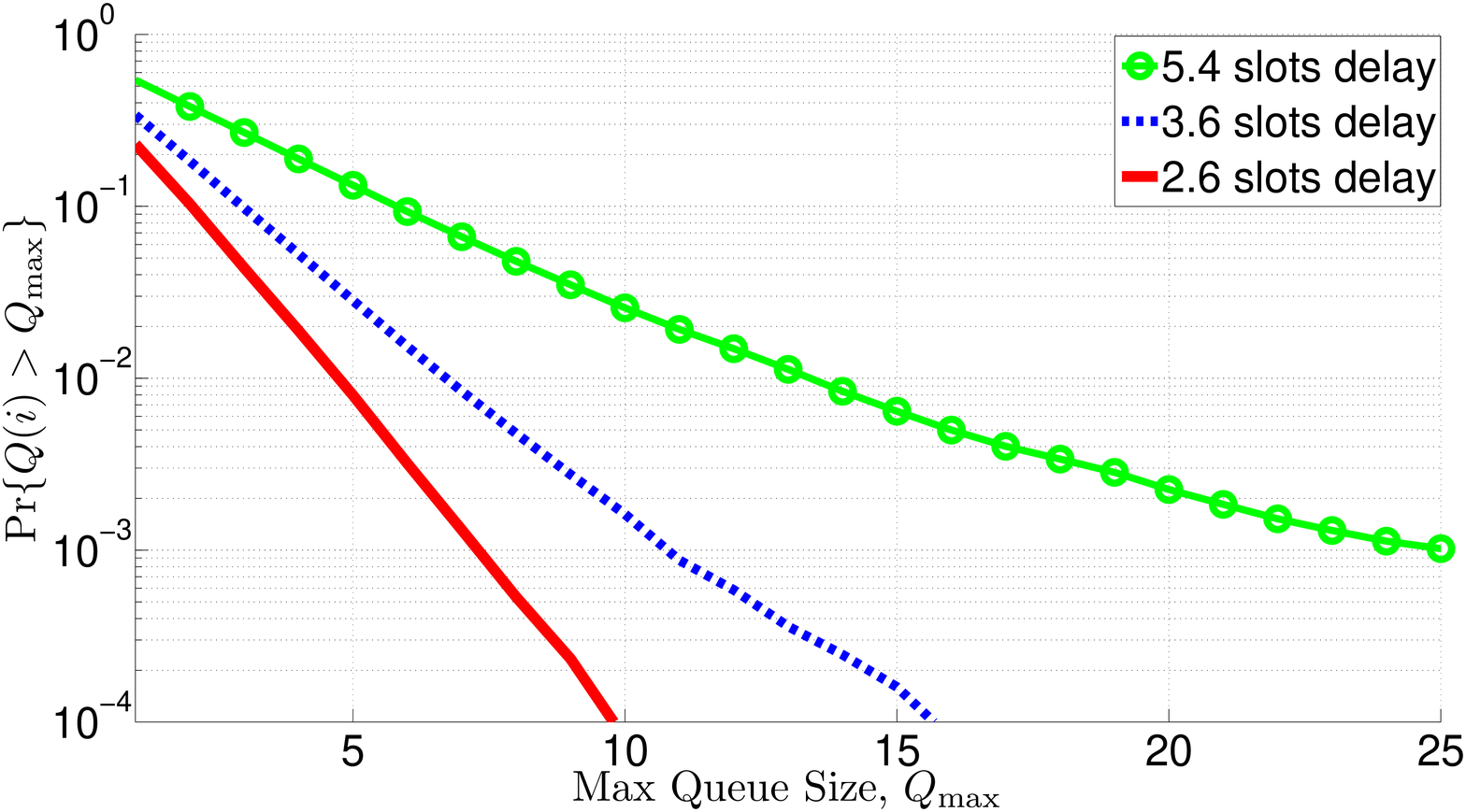}
\caption{Probability of a dropped bit vs.~queue size $Q_{\max}$ for buffer-aided relaying with adaptive link selection and starved buffer. $\Omega_S=\Omega_R=1$. $\mathcal{F}(x)=x$.} \label{fig7}
\end{figure}
\begin{figure}
\includegraphics[width=7in ,height=5in]{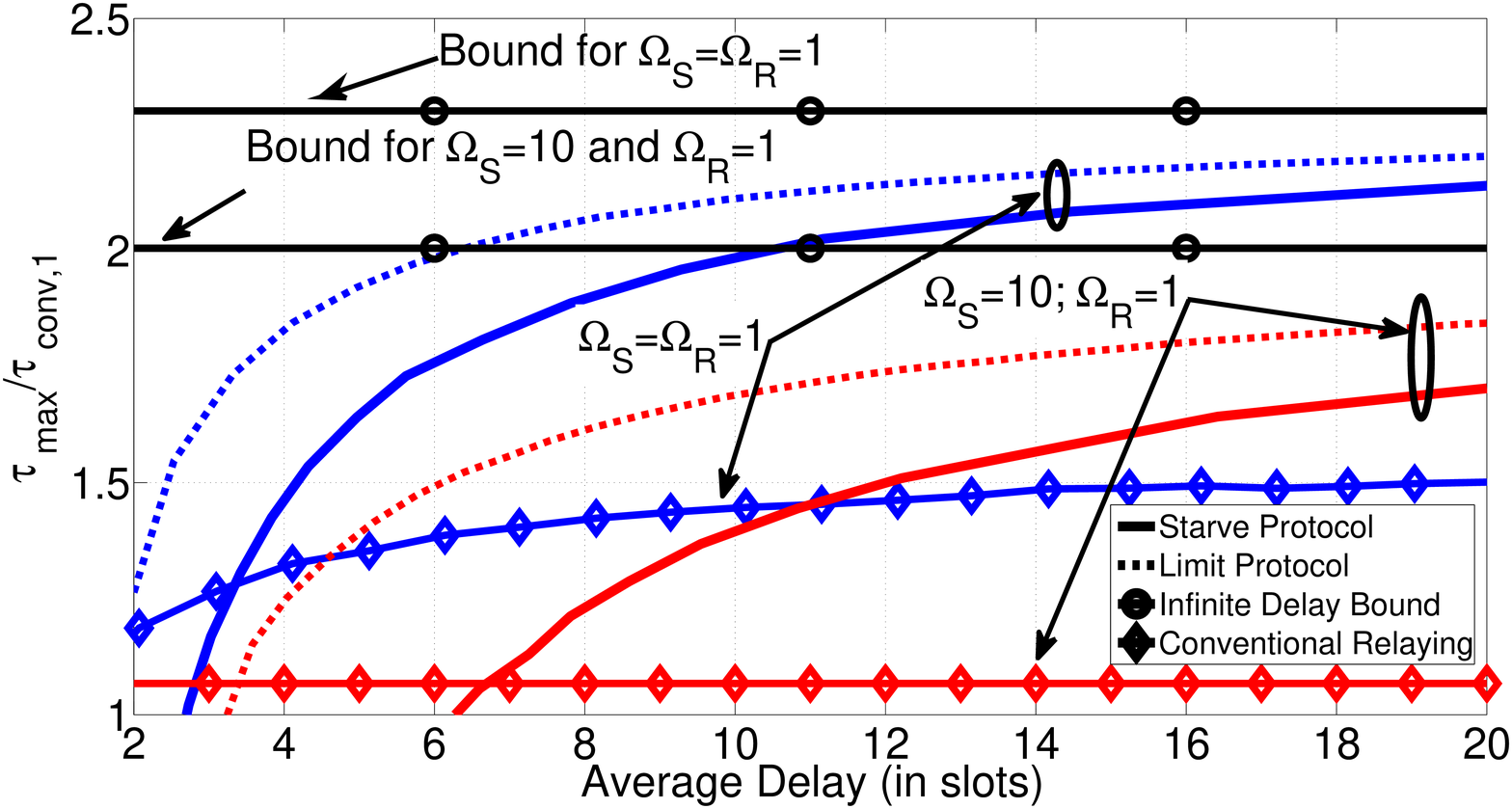}
\caption{Throughput ratio $\tau_{\rm max}/\tau_{\rm conv,1}$ vs.~average delay for the proposed buffer-aided relaying schemes with starved buffer and limited buffer and conventional relaying with buffer \cite{XFTP08}.} \label{fig8}
\end{figure}

\end{document}